\documentclass[letterpaper, 10 pt, conference]{ieeeconf}  
\IEEEoverridecommandlockouts
\overrideIEEEmargins

\usepackage{amsmath,bm}
\usepackage{color}
\usepackage{multirow}
\usepackage{mathrsfs}
\usepackage{lineno,hyperref}
\usepackage{lipsum}
\usepackage{dsfont}
\usepackage{mathtools}
\usepackage{amssymb}

\newtheorem{remark}{Remark}
\newtheorem{lemma}{Lemma}

\newtheorem{assumption}{Assumption}

\newtheorem{proposition}{Proposition}


\def\begcen{\begin{center}}
\def\endcen{\end{center}}

\newcommand{\col}{\mbox{col}}

\def\call{{\cal L}}

\def\calz{{\cal Z}}

\def\calh{{\cal H}}

\def\bp{{s}}

\def\L2e{{\cal L}_{2e}}

\def\rea{\mathbb{R}}

\def\et{\epsilon_t}

\def\l2{{\mathcal L}_2}
\def\l2e{{\cal L}_{2e}}

\def\rea{\mathbb{R}}

\def\begequarr{\begin{eqnarray}}
\def\endequarr{\end{eqnarray}}
\def\begequarrs{\begin{eqnarray*}}
\def\endequarrs{\end{eqnarray*}}
\def\begarr{\begin{array}}
\def\endarr{\end{array}}
\def\begequ{\begin{equation}}
\def\endequ{\end{equation}}
\def\lab{\label}
\def\begdes{\begin{description}}
\def\enddes{\end{description}}
\def\begenu{\begin{enumerate}}
\def\begite{\begin{itemize}}
\def\endite{\end{itemize}}
\def\endenu{\end{enumerate}}

\def\lef[{\left[\begin{array}}
\def\rig]{\end{array}\right]}

\def\begcen{\begin{center}}
\def\endcen{\end{center}}
\def\begrem{\begin{remark}\rm}
\def\endrem{\end{remark}}

\def\TAC{{\it IEEE Trans. Automatic Control}}

\def\SCL{{\it Systems \& Control Letters}}

\def\CST{{\it IEEE Trans. Control Systems Technology}}



\def\begmat#1{\begin{bmatrix}#1\end{bmatrix}}
\def\begali#1{\begin{align}{#1}\end{align}}
\def\begalis#1{\begin{align*}{#1}\end{align*}}

\usepackage{color}


\title{\LARGE \bf
An Adaptive Observer for Sensorless Control of the Levitated Ball Using Signal Injection
}

\author{Bowen Yi, Romeo Ortega, Houria Siguerdidjane and Weidong Zhang
\thanks{This paper is supported by the National Natural Science Foundation of China (61473183, U1509211), China Scholarship Council, the Government of Russian Federation (074U01), the Ministry of Education and Science of Russian Federation (14.Z50.31.0031). {\em (Corresponding author: W. Zhang.)}}
\thanks{Bowen Yi and Weidong Zhang are with Department of Automation, Shanghai Jiao Tong University, Shanghai 200240, China
        {\tt\small yibowen@ymail.com, wdzhang@sjtu.edu.cn}}%
\thanks{Romeo Ortega and Houria Siguerdidjane are with L2S, CNRS-CentraleSup\'elec-Universit\'e Paris Saclay, Plateau du Moulon, Gif-sur-Yvette 91192, France {\tt\small ortega@lss.supelec.fr, houria.siguerdidjane@centralesupelec.fr}}%
}

\begin{document}

\maketitle
\thispagestyle{empty}
\pagestyle{empty}

\begin{abstract}
In this paper we address the problem of sensorless control of the 1-DOF magnetic levitation system. Assuming that only the current and the voltage are measurable, we design an adaptive state observer using the technique of signal injection. Our main contribution is to propose a  new filter to identify the virtual output generated by the signal injection. It is shown that this filter, designed using the dynamic regressor extension and mixing estimator, outperforms the classical one. Two additional features of the proposed observer are that (i) it does not require the knowledge of the electrical resistance, which is also estimated on-line and (ii) exponential convergence to a tunable residual set is guaranteed {without}  excitation assumptions. The observer is then applied, in a certainty equivalent way, to a full state-feedback control law to obtain the sensorless controller, whose performance is assessed via simulations and experiments.
\end{abstract}

\section{Introduction}
\lab{sec1}
%
Magnetic levitation (MagLev) systems are widely used in industry, \emph{e.g.}, rocket-guiding projects, high speed rail transportation, bearingless motors, vibration isolation, magnetic bearing, bearingless pumps, and microelectromechanical systems, see \cite{HANKIM,schw09book} for recent reviews of Maglev systems applications. The inherent instability and high nonlinearity of MagLev systems, make them a theoretical benchmark in the nonlinear control community, with a prototype example being the simple levitated ball.

Detecting the position of the moving objects in MagLev systems needs highly expensive sensors, which usually have low accuracies. These facts stimulate the research for sensorless (also called self-sensing) control of MagLev systems, which require only the measurement of the electrical coordinates. Several technologically motivated sensorless controller have been reported by the applications community \cite{schw09book}. However, to the best of the authors' knowledge, besides some results based on linearized models, {\em e.g.}, \cite{gluck11cep,Miz96tcst}, there are no model-based designs reported in the control community---even for the widely studied levitated ball. One plausible explanation for this situation is that the dynamic structure of Maglev systems does not fit into the mathematically-oriented structures studied by the observer design community \cite{bes07book,gaut01book}, with the additional difficulty that the system is not uniformly observable.

To overcome the first difficulty, in \cite{bobt17auto} a system-tailored observer, that exploits the particular structure of the MagLev model, and the corresponding (certainy equivalence-based) sensorless controller, were proposed. The design relies on the use of parameter estimation-based observers (PEBO) \cite{rom15scl}, which combined with the dynamic regressor extension and mixing (DREM) parameter estimation technique \cite{stastac17,romauto17}, allow the reconstruction of  the magnetic flux. This is later used, by two suitably designed observers for the mechanical coordinates. The loss of observability problem mentioned above, hampers the application of this scheme in ``underexcited" situations, hence requiring an---{\em a priori} unverifiable---richness assumption.

An alternative to overcome the observability obstacle in general nonlinear systems is explored in  \cite{yiscl18} where, following \cite{combesacc16}, probing high-frequency signals are injected in the control, to generate (so-called) virtual outputs used for the observer design. To detect the virtual output, the filter proposed in \cite{combesacc16}, see also \cite{COMetal}, is  applied for PEBO design to the levitated ball and a two-tank system in \cite{yiscl18}. Although the correct asymptotic behavior of the filter is theoretically guaranteed, a bad transient performance, and strong sensitivity to the tuning---and system---parameters was observed for the levitated ball. In particular, the performance was significantly degraded with variations in the systems resistance, that are unavoidable in a practical situation.

In this paper we propose to replace the aforementioned filter by an adaptive scheme that, besides ensuring a better transient performance, removes the need of knowing the systems resistance. Similarly to \cite{bobt17auto}, the design of the new  filter and the resistance estimator, use the DREM estimator, yielding a gradient descent-like adaptive observer. As usual in DREM \cite{stastac17,romauto17}, a key step is the suitable choice of the operators used for the construction of the extended regressor matrix. A central contribution of the paper is to propose a weighted zero-order-hold (WZOH) operator \cite{MIDGOO}, which combined with a delay operator, generates a suitable scalar regressor that---due to the use of probing signals---verifies the excitation condition required to recover the virtual output. Using the latter, flux, position and velocity of the levitated ball system are, then, easily estimated. It is shown that the estimation errors converge exponentially fast into a tunable residual set, thus ensuring some good robustness properties.

The remainder of the paper is organized as follows. Section \ref{sec2} briefly introduces the model of the levitated ball and formulates its state observer and sensorless control problems. Section \ref{sec3} presents the new robust virtual output estimator. In Sections \ref{sec4} and \ref{sec5}, the adaptive observer is designed and analyzed. Simulations and experimental results are given in Section  \ref{sec6}. The paper is wrapped-up with concluding remarks and future research directions in \ref{sec7}.\\

\textbf{Notation.}
$\et$ is a generic exponentially decaying term with a proper dimension. With the standard abuse of notation, the Laplace transform symbol $s$ is used also to denote the derivative operator ${d \over dt}$. $\mathcal{O}$ is the uniform big O symbol, that is, $f(z,\varepsilon)=\mathcal{O}(\varepsilon)$ if and only if $|f(z,\varepsilon)|\le C\varepsilon$, for a constant $C$ independent of $z$ and $\varepsilon$. For an operator $\calh$ acting on a signal we use the notation $\calh[\cdot](t)$, when clear from the context, the argument $t$ is omitted.

\section{Model and Problem Formulation}
\label{sec2}

The classical model of the unsaturated, levitated ball depicted in Fig. 1 is given as \cite{schw09book}
\begequ
\label{model}
\begin{aligned}
    \dot{\lambda} & = - Ri + u \\
    \dot{q} & = {1 \over m} p \\
    \dot{p} & = {1 \over 2k} \lambda^2 - mg\\
    \lambda &={k \over c- q}i,
\end{aligned}
\endequ
where $\lambda$ is the flux linkage, $i$ the current, $q \in (-\infty, c)$ is the position of the ball, $p$ is the momenta, $u$ is the input voltage, $R>0$ is the  resistance, and $m>0$, $c>0$ and $k>0$ are some constant parameters.\\
\begin{figure}
    \centering
    \includegraphics[width=3cm,height=3cm]{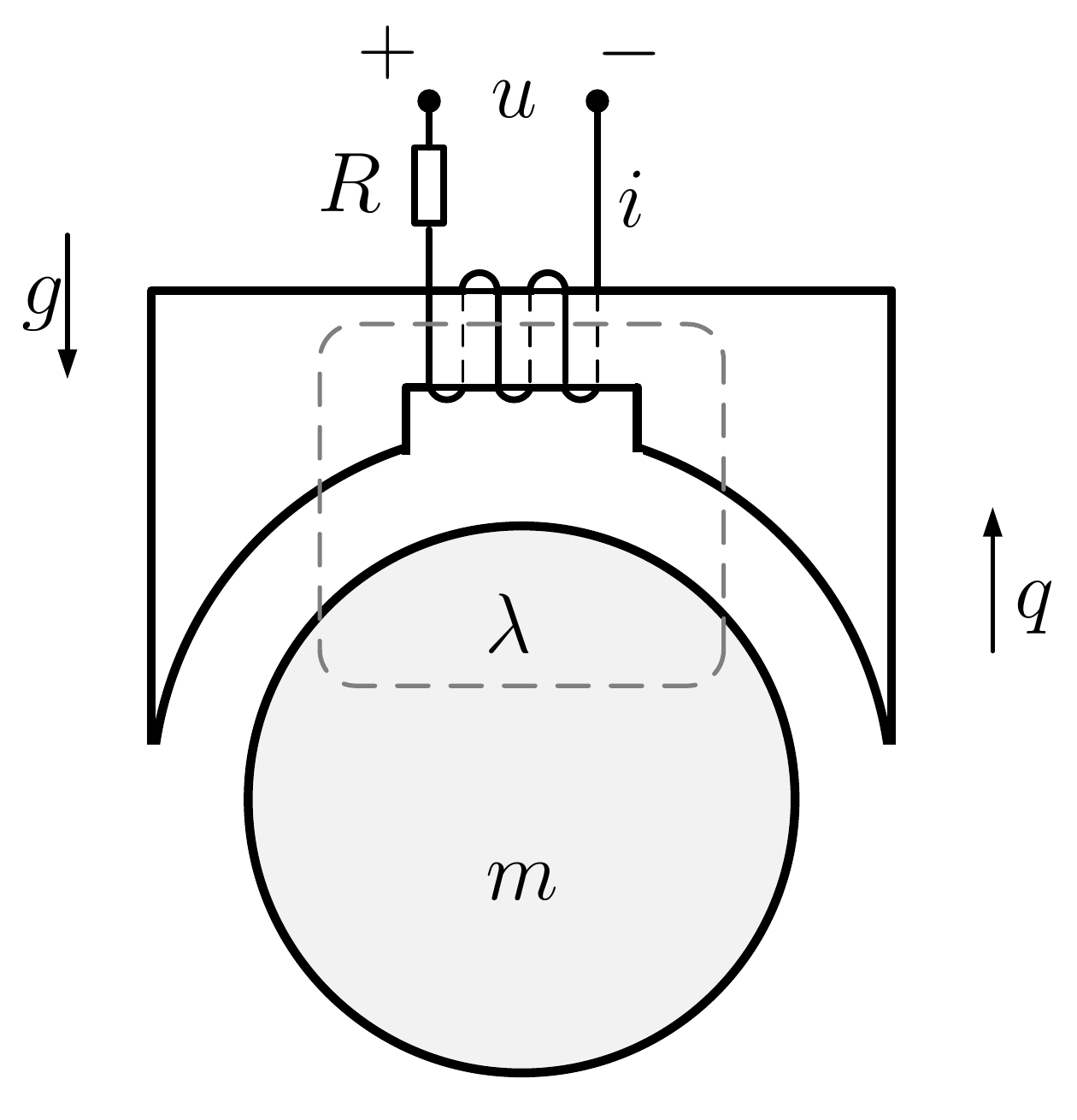}
    \label{fig:maglec}
    \caption{Schematic diagram of the 1-DOF MagLev systems}
\end{figure}
To simplify the notation in the sequel we introduce a change of coordinate for the position and, denoting
$$
x:=  \col(\lambda,q -c,p),
$$
write the system dynamics in the standard form $\dot x=f(x)+gu$ with
\begequ
\lab{fg}
f(x) := \left[\begin{aligned}
                 &{R \over k}x_1 x_2 \\
                & {1\over m}x_3 \\
                 &{1\over 2k}x_1^2 - mg.
               \end{aligned}\right],\;g:=\begmat{1 \\ 0 \\ 0},
\endequ
and define an output
\begequ
\lab{h}
y=h(x):= x_1x_2,
\endequ
which clearly satisfies $y=-ki$.
In this paper we provide a solution to the following.\\

\noindent {\bf Adaptive State Observer Problem.} Consider the dynamics of the levitated ball \eqref{model}, represented in the form $\dot x=f(x)+gu,\;y=h(x)$, with \eqref{fg} and \eqref{h}, the parameters $m$, $c$ and $k$ known, and $R$ {unknown}. Design an adaptive observer
\begequ
\lab{adaobs}
\begin{aligned}
    \dot{\chi} & = F(\chi,u,y)\\
    \hat{x} & = H_(\chi,u,y)
\end{aligned}
\endequ
where $\chi \in \rea^{n_{\chi}}$ is the observer state, such that
\begequ
\label{x_conv}
\limsup_{t\to\infty} \big| \hat{x}(t) - x(t)  \big|  \le \mathcal{O}({\varepsilon})
\endequ
with $\varepsilon>0$ a tunable (small) constant.

As usual in observer design problems we need the following.

\begin{assumption}
\lab{ass1}
Consider the system \eqref{model} with input \eqref{siginj}. The controller signal $u_C$ is such that all the states are {\em bounded}.\footnote{To avoid cluttering the notation we use the generic symbol $\kappa$ to denote a positive constant that upperbounds all signals.}
\end{assumption}

The sensorless controller is obtained applying certainty equivalence to the linear, static-state feedback, asymptotically stabilizing, interconnection and damping assignment passivity-based control (IDA-PBC) reported in \cite{ORTetalcsm}, to ensure
\begequ
\lab{tracla}
 \limsup_{t \to \infty}  |q(t) - q_\star| \le \mathcal{O}({\varepsilon}),
\endequ
where $q_\star$ is the desired position for the levitated ball.

\begrem
\lab{rem1}
We make the important observation that it is possible to show that the system does not satisfy the observability rank condition [Section 1.2.1]\cite{bes07book}, therefore it is not uniformly differentially observable.
\endrem
%
\section{Signal Injection and Virtual Output Filter}
\label{sec3}
%
In order to overcome the lack of observability problem, we follow the signal injection method proposed in \cite{combesacc16}, and further elaborated in \cite{yiscl18}, to generate a new ``virtual" output. As shown in those papers the latter is given by
\begequ
\lab{yv}
y_v=\left({\partial h(x) \over \partial x}\right)^\top g=x_2.
\endequ
To generate $y_v$, we add to the {\em controller output}, denoted $u_C$, a high-frequency sinusoidal signal $s$ to generate the actual input to the system, that is,
\begali{
\nonumber
u & = u_C +  s\\
\label{siginj}
s(t) & = A_0\sin \bigg({2\pi \over \varepsilon} t \bigg),
}
with $\varepsilon>0$. As shown in \cite{combesacc16,yiscl18}, a second-order averaging analysis establishes that, there exists $\varepsilon_\star>0 $ such that, for all $\varepsilon \in (0,\varepsilon_\star]$, we have that the following identity in the interval $[t,t+\mathcal{O}(1/\varepsilon))$,
\begali{
\nonumber
x_1 &= \Bar{x}_1+ \varepsilon S + \mathcal{O}(\varepsilon^2)\\
\nonumber
x_2 &= \Bar{x}_2 + \mathcal{O}(\varepsilon^2)\\
\lab{aveana}
x_3 &=\Bar{x}_3 + \mathcal{O}(\varepsilon^2),
}
where $S$ is the primitive of $s$, that is,
\begequ
\lab{S}
S(t) := - { A_0 \over 2\pi} \cos \bigg({2\pi \over \varepsilon} t\bigg),
\endequ
and the overline denotes the states of the average system, namely,
\begequ
\label{model_ave}
\dot {\bar x}=f(\bar x)+gu_C
\endequ
with $\bar x(0) = x(0)$. Some simple calculations show that
\begequ
\label{rel1}
y = \bar{y} + \varepsilon S y_v + \mathcal{O}(\varepsilon^2),
\endequ
with $\bar y:=\bar x_1 \bar x_2$.

To simplify the notation, and with some obvious abuse of notation, in the sequel we omit the clarification that the averaging analysis only insures the existence of a lower bound on $\varepsilon$ such that \eqref{aveana} holds, and we simply assume that $\varepsilon$ is small enough.

We recall that the problem is to reconstruct $y_v$ out of the measurement of $y$, for which a sliding-window filter is proposed in  \cite{combesacc16}, and also used in \cite{yiscl18}. As discussed in the introduction, the use of this filter generated some serious robustness problem. Consequently, we propose in this paper to replace it by an estimator that, similarly to DREM, implements a gradient descent observer based on a suitable linear regression model.

The first step is then to obtain the linear regression model, making the key observation that, with respect to $S$, the signals $\bar y$ and $y_v$ are slowly time-varying. This motivates us to view \eqref{rel1} as a linear (time-varying) regression perturbed by a small term $\mathcal{O}(\varepsilon^2)$. Whence,  we write \eqref{rel1} as
\begequ
\label{iphithe}
\begin{aligned}
    y & = \phi^\top \theta+ \mathcal{O}(\varepsilon^2)\\
    \theta & := \col(\bar{y}, \varepsilon y_v)\\
    \phi &:= \col(1, S),
\end{aligned}
\endequ
with $y$ the measurable signal, and $\phi$ and $\theta$ playing the roles of known regressor and (slowly time-varying) parameters to be estimated.

We will estimate the parameters $\theta$ using the DREM estimator---we refer the reader to \cite{stastac17} for additional details on DREM. The main idea of DREM is to generate from \eqref{iphithe} a {\em scalar} regression for the ``parameter" of interest, in this case, $\theta_2=\varepsilon y_v$. Although this can be achieved with arbitrary $\call_\infty$-stable, linear operators, the resulting scalar regressor does not necessarily satisfy the excitation conditions required to ensure parameter convergence. It turns out that in our case, the latter is possible, selecting some specific operators as detailed in the lemma below, whose proof is given in the Appendix.

To streamline the presentation of the lemma we define two $\call_\infty$-stable, linear operators, first, the delay operator $\calh_d$, with parameter $d>0$,
\begequ
\lab{hd}
\mathcal{H}_{{d}}[v](t)= v(t-d).
\endequ
Second, the WZOH operator \cite{MIDGOO} $\calz_w$, parameterized by $w>0$, defined as
\begequ
\lab{zw}
\begin{aligned}
\dot{\chi}(t) & = v(t) \\
\calz_w[v](t) & ={1\over w} \big[ \chi(t) - \chi(t- w) \big].
\end{aligned}
\endequ
%
\begin{lemma}
\lab{lem1}
Consider the linear regression \eqref{iphithe} and the operators $\calh_d$ \eqref{hd} and $\calz_w$ \eqref{zw}, with $w=2d$ and $d=n\varepsilon$, for some  $n \in \mathbb{Z}_+$.
Then,
\begequ
\lab{scalinreg}
Y(t) = S(t-d) \theta_2(t-d) + \mathcal{O}(\varepsilon^2),
\endequ
where we defined the measurable signal
\begequ
\lab{Y}
Y: = (\mathcal{H}_d  - \mathcal{Z}_{2d})[y].
\endequ
\end{lemma}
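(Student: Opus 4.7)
The plan is to substitute the linear regression \eqref{iphithe} directly into the two operators and check that the oscillating and slow components combine to leave exactly the desired term. Write $y(\tau) = \bar{y}(\tau) + \varepsilon S(\tau) y_v(\tau) + \mathcal{O}(\varepsilon^2)$. First, applying $\calh_d$ is trivial:
\begali{
\nonumber
\calh_d[y](t) = \bar{y}(t-d) + \varepsilon S(t-d) y_v(t-d) + \mathcal{O}(\varepsilon^2).
}
So the task reduces to showing that $\calz_{2d}[y](t) = \bar{y}(t-d) + \mathcal{O}(\varepsilon^2)$, after which a subtraction yields \eqref{scalinreg}.

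For the WZOH term, I would use that, from \eqref{zw}, $\calz_{2d}[v](t) = {1 \over 2d}\int_{t-2d}^t v(\tau)\, d\tau$, and note that the window $[t-2d, t]$ is symmetric about $t-d$, while $2d = 2n\varepsilon$ is an integer multiple of the period of $S$. Split the integral into a slow part (from $\bar{y}$) and a fast part (from $\varepsilon S y_v$). For the slow part, Taylor-expand $\bar{y}(\tau)$ around $\tau = t-d$; the constant term averages to $\bar{y}(t-d)$, the linear term vanishes by symmetry of the window, and the remainder is $\mathcal{O}(d^2) = \mathcal{O}(\varepsilon^2)$ because $\ddot{\bar{y}}$ is bounded under Assumption~\ref{ass1}. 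For the fast part, Taylor-expand $y_v(\tau)$ around $\tau = t-d$ as well; the zeroth-order term is $\varepsilon y_v(t-d) \cdot {1\over 2d} \int_{t-2d}^t S(\tau)\, d\tau$, and this integral is \emph{exactly zero} since $S$ is $\varepsilon$-periodic with zero mean and the window length is an integer multiple of $\varepsilon$.

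It then remains to bound the first-order remainder, $\varepsilon \dot{y}_v(t-d) \cdot {1\over 2d} \int_{t-2d}^t S(\tau)(\tau-(t-d))\, d\tau$. I would evaluate the inner integral explicitly using the closed form \eqref{S} and the identity $\cos(2\pi d / \varepsilon) = 1$ coming from $d = n\varepsilon$; this yields a quantity of order $\varepsilon d$, so the whole contribution is $\mathcal{O}(\varepsilon^2)$ (again invoking boundedness of $\dot{y}_v$). The residual terms of higher order in the Taylor expansion are easily $\mathcal{O}(\varepsilon^2)$ since they carry two additional factors of $d$. Combining the estimates gives $\calz_{2d}[y](t) = \bar{y}(t-d) + \mathcal{O}(\varepsilon^2)$, and subtracting from $\calh_d[y](t)$ produces $Y(t) = \varepsilon S(t-d) y_v(t-d) + \mathcal{O}(\varepsilon^2)$, which is \eqref{scalinreg} since $\theta_2 = \varepsilon y_v$.

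The main obstacle, conceptually, is to identify the two simultaneous cancellations that make everything work: the symmetric placement of the window about the delay $t-d$ kills the linear drift of the slow signals, while the commensurability $2d = 2n\varepsilon$ kills the zeroth-order fast contribution. Both are design choices baked into the statement, so the rest of the proof is a careful but routine Taylor-plus-periodicity calculation; the only technicality is checking that the first-order fast remainder, which does \emph{not} vanish identically, nevertheless scales like $\varepsilon^2$ once the normalization $1/(2d)$ is taken into account.
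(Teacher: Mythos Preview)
Your argument is correct and follows essentially the same route as the paper's proof: both write the WZOH as a moving average, Taylor-expand the slow and fast components, and use the commensurability $2d=2n\varepsilon$ to annihilate the leading fast contribution, concluding that $\calz_{2d}[y](t)=\bar y(t-d)+\mathcal{O}(\varepsilon^2)$ and then subtracting from $\calh_d[y]$. The only cosmetic difference is that you expand about the window midpoint $t-d$ (so the linear drift term vanishes by symmetry), whereas the paper expands about $t$ and then recognises the first two terms as the Taylor series of $\bar y(t-w/2)$; the content is the same.
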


In the proposition below, we propose a simple gradient descent algorithm to identify $\theta_2$ from \eqref{scalinreg}.
%
\begin{proposition}
\label{prop1}
Consider the system \eqref{model} with measurable output \eqref{h} and input \eqref{siginj} verifying Assumption \ref{ass1}. Define the virtual output estimator
\begequ
\label{vir_out_filter}
\begin{aligned}
    \dot{\hat{\theta}}_2 & = \gamma S(t-d) \big[Y(t) - S(t-d) \hat{\theta}_2\big] \\
        \hat{y}_v & = {\hat{\theta}_2 \over \varepsilon },
\end{aligned}
\endequ
where $\gamma>0$ is a tuning gain, $S$ is given in \eqref{S}, $Y$ in \eqref{Y}, $\mathcal{H}_d$ and $\mathcal{Z}_{2d}$ are defined in \eqref{hd} and \eqref{zw}, respectively, with $d=n\varepsilon$, for some  $n \in \mathbb{Z}_+$.   Then,
\begequ
\lab{tilyv}
\lim_{t\to \infty} |\hat{y}_v(t) - y_v(t) | \le \mathcal{O}({\varepsilon}) \quad \text{(exp.)},
\endequ
where $y_v$ is defined in \eqref{yv}.
\end{proposition}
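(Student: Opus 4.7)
The plan is to regard \eqref{vir_out_filter} as a scalar gradient-descent estimator driven by the regression \eqref{scalinreg} of Lemma~\ref{lem1}, and to conclude by combining a persistence-of-excitation argument for the regressor $S(t-d)$ with a standard input-to-state bound on the resulting linear time-varying error dynamics.

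First, I would substitute the identity $Y(t)=S(t-d)\theta_2(t-d)+\mathcal{O}(\varepsilon^2)$ of Lemma~\ref{lem1} into \eqref{vir_out_filter} and introduce the estimation error $\tilde\theta_2(t):=\hat\theta_2(t)-\theta_2(t-d)$. Using that under Assumption~\ref{ass1} the quantity $\theta_2=\varepsilon y_v$ is slowly time-varying --- handled, as is standard in DREM, as essentially frozen over the PE window --- this gives the linear perturbed dynamics
\begin{equation*}
\dot{\tilde\theta}_2 \;=\; -\gamma\,S^2(t-d)\,\tilde\theta_2 \,+\, \delta(t),
\end{equation*}
where the perturbation $\delta(t)$ collects the Lemma~\ref{lem1} residual multiplied by $\gamma S(t-d)$ and is therefore of order $\varepsilon^2$.

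Next, I would establish persistence of excitation of the scalar regressor $S^2(t-d)$. Since $S$ in \eqref{S} is a pure cosine of fixed amplitude $A_0/(2\pi)$ and period $\varepsilon$, direct integration over any integer number of periods yields
\begin{equation*}
\int_{t}^{t+T} S^2(\tau-d)\,d\tau \;\ge\; \alpha\,T, \qquad \forall\, T\ge \varepsilon,
\end{equation*}
with $\alpha>0$ independent of $\varepsilon$. A standard scalar LTV result then implies that the unperturbed system $\dot z=-\gamma S^2(t-d)\,z$ is uniformly exponentially stable at a rate $\rho$ proportional to $\gamma\alpha$, and independent of $\varepsilon$.

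Finally, I would apply variation of constants (equivalently, the Lyapunov function $V=\tilde\theta_2^2/2$) to the perturbed error equation to obtain a bound of the form $|\tilde\theta_2(t)|\le c_1 e^{-\rho t}+c_2\|\delta\|_\infty = c_1 e^{-\rho t}+\mathcal{O}(\varepsilon^2)$. Since $\hat y_v - y_v = \tilde\theta_2/\varepsilon + \bigl(y_v(t-d)-y_v(t)\bigr)$ and the delay shift is itself $\mathcal{O}(\varepsilon)$, dividing by $\varepsilon$ produces the claimed exponential bound \eqref{tilyv}. The main obstacle will be the careful bookkeeping of the $\mathcal{O}(\cdot)$ perturbation magnitudes --- in particular, justifying that the drift of $\theta_2$ contributes to $\delta(t)$ at order $\varepsilon^2$ rather than $\varepsilon$, which is a manifestation of the DREM slowly-varying-parameter treatment combined with the PE structure of $S^2(t-d)$.
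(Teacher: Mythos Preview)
Your plan is essentially the paper's own argument: form the scalar error, substitute Lemma~\ref{lem1}, exhibit persistence of excitation of $S^2(t-d)$ by direct integration of the cosine, and finish with a standard perturbation/ISS bound. The only differences are cosmetic---the paper defines $\tilde\theta_2:=\hat\theta_2-\varepsilon y_v(t)$ (no delay) and writes the residual in the error dynamics as $\mathcal{O}(\varepsilon)$ rather than $\mathcal{O}(\varepsilon^2)$, then simply invokes ``basic perturbation analysis''; in particular, the order-bookkeeping you single out as the main obstacle is not worked out explicitly in the paper's proof either.
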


\begin{proof}
Define the error signal
$$
\Tilde{\theta}_2:= \Hat{\theta}_2 - \varepsilon y_v .
$$
Invoking Lemma \ref{lem1}, and replacing \eqref{scalinreg} in \eqref{vir_out_filter} we get
\begin{equation}
\label{theta2dyn}
\dot{\tilde{\theta}}_2
    := - \gamma S^2(t-d)\tilde{\theta}_2 +\mathcal{O}\big(\varepsilon\big).
\end{equation}
From \eqref{S} it is clear that $S(t)$ satisfies
  $$
    \int_{t}^{t+  \varepsilon}S^2(\mu-d) d\mu \ge S_0
  $$
for any $t$ and some $S_0>0$. Therefore, some basic perturbation analysis completes the proof.
\end{proof}

\begin{remark}
The virtual output filters in \cite{combesacc16,yiscl18} compute the estimate of the virtual output by averaging in a moving horizon $[t-n\varepsilon,t]$ the observation error---that is, in fact, an open-loop design. The new filter \eqref{vir_out_filter} provides an alternative \emph{closed-loop} approach, making it robust to unavoidable measurement noise.
\end{remark}

\section{Adaptive Observer Design}
\label{sec4}
%
In this section we  design the adaptive state observer using the estimate of $y_v$ of Proposition \ref{prop1}. To enhance readability it is split into three subsections presenting, respectively, the resistance estimator, the flux observer and the observer for position and velocity.

\subsection{Resistance identification}
\lab{subsec41}
%
Before presenting the resistance estimator we make the observation that, due to the physical constraints $q \in (-\infty,c)$. As expected, we impose this constraint also to its estimate,\footnote{This can easily be done adding a projection operator to the second equation in \eqref{vir_out_filter}, but is omitted for brevity.} hence
\begali{
\nonumber
y_v =q - c\leq \ell < 0 \\
\lab{yvneqzer}
\hat y_v \leq \ell < 0.
}

As expected in adaptive systems design it is necessary to impose an excitation constraint.

\begin{assumption}
\lab{ass2}
Consider the system \eqref{model} with input \eqref{siginj}. The current $i$ is persistently exciting (PE), that is, there exist $T_i>0$ and $\delta_i>0$ such that
\begequ
\lab{ipe}
\int_t^{t+T_i} i^2(\tau)d\tau \geq \delta_i,
\endequ
for all $t \geq 0$.
\end{assumption}
%
\begin{proposition}
\label{prop2}
Consider the system \eqref{model} with input \eqref{siginj} verifying Assumptions \ref{ass1} and \ref{ass2}. Define the resistance estimator as
\begequ
\label{estimator2}
\dot{\Hat{R}} = \gamma_R \phi_R \big(Y_R - \phi_R \Hat{R}\big)
\endequ
with $\gamma_R>0$ a tuning gain, and
\begali{
\nonumber
\dot{v}_1 & = - a v_1 + au \\
\nonumber
\dot{v}_2 & = - a v_2 + a \bigg( {y\over \hat{y}_v}\bigg)\\
\nonumber
\dot \phi_R & = -a \phi_R + {a\over k}y\\
\lab{stareafil}
Y_R & = -v_1 + a{y\over \hat{y}_v} - av_2,
}
with $a>0$ and $\hat y_v$ generated as in Proposition \ref{prop1}.  Then,
$$
\lim_{t\to\infty} \big| \Hat{R}(t) - R\big| \le \mathcal{O}({\varepsilon}) \quad \text{(exp.)}
$$
\end{proposition}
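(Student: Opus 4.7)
The plan is to recast $R$ as the unknown parameter of a perturbed scalar linear regression of the form $Y_R = R\phi_R + \mathcal{O}(\varepsilon) + \et$, to verify that $\phi_R$ is persistently exciting, and then to conclude by standard perturbation analysis of the gradient descent \eqref{estimator2}.

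First I would construct the regression by applying the stable, proper filter $a/(s+a)$ to the flux equation $\dot\lambda = -Ri + u$ of \eqref{model}. Using the identity $as/(s+a) = a - a^2/(s+a)$ gives, modulo an exponentially decaying transient,
\[
a\lambda - a\,\tfrac{a}{s+a}[\lambda] \;=\; v_1 \;-\; R\,\tfrac{a}{s+a}[i] \;+\; \et,
\]
and since $y = -ki$ from \eqref{h} this rearranges to $R\phi_R = a\lambda - a\,\tfrac{a}{s+a}[\lambda] - v_1 + \et$. Because $\lambda = y/y_v$ is not measurable, the next step is to substitute $y/\hat y_v$: by \eqref{yvneqzer} both $y_v$ and $\hat y_v$ are bounded above by $\ell<0$, by Assumption \ref{ass1} $y$ is bounded, and by Proposition \ref{prop1} $|\hat y_v - y_v|\le \mathcal{O}(\varepsilon)$, so that $|y/y_v - y/\hat y_v|=\mathcal{O}(\varepsilon)$ uniformly. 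Since $a/(s+a)$ is $\linf$-stable, the same order of smallness is preserved after filtering, giving $\tfrac{a}{s+a}[\lambda] = v_2 + \mathcal{O}(\varepsilon)$. Collecting terms and identifying $Y_R$ as in \eqref{stareafil} yields the perturbed regression $Y_R = R\phi_R + \mathcal{O}(\varepsilon) + \et$.

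Next, to verify PE of the scalar regressor $\phi_R$, I would observe that it is the output of the stable, strictly proper, minimum-phase LTI filter $-a/[k(s+a)]$ fed by the measurable current $i=-y/k$. By Assumption \ref{ass2} $i$ is PE, and the classical fact that PE is preserved through such filters supplies constants $T_R,\delta_R>0$ with $\int_t^{t+T_R}\phi_R^2(\tau)\,d\tau \ge \delta_R$ for all $t\ge 0$. Substituting the regression into \eqref{estimator2} and using $\dot R = 0$ then produces the scalar error dynamics
\[
\dot{\tilde R} \;=\; -\gamma_R\,\phi_R^2\,\tilde R \;+\; \gamma_R\,\phi_R\,\big(\mathcal{O}(\varepsilon) + \et\big).
\]
Since $\phi_R$ is bounded (a bounded signal through a stable filter) and PE, the unperturbed system is uniformly globally exponentially stable; standard results on exponentially stable linear time-varying systems under bounded perturbations then yield $|\tilde R(t)| \le \mathcal{O}(\varepsilon)$ exponentially fast, as claimed.

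The step I expect to be the most delicate is the second one: carefully quantifying how the $\mathcal{O}(\varepsilon)$ error in $\hat y_v$ propagates through the nonlinear map $y\mapsto y/\hat y_v$ and then through the dynamic filter producing $v_2$. The key enabler is the implicit projection embodied by \eqref{yvneqzer}, which keeps $\hat y_v$ uniformly bounded away from zero; without it the substitution for $\lambda$ could not be made well-posed, and the $\mathcal{O}(\varepsilon)$ bound on the regression error would fail.
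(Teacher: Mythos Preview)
Your proposal is correct and follows essentially the same route as the paper: derive the regression by filtering the flux equation, replace the unavailable $\lambda=y/y_v$ by $y/\hat y_v$, control the resulting perturbation via \eqref{yvneqzer} and Proposition~\ref{prop1}, and conclude by PE of $\phi_R$ plus a perturbation argument for the gradient error dynamics. The only cosmetic difference is that the paper packages the perturbation analysis by introducing an auxiliary ``ideal'' filter state $v_2^*$ and signal $Y_R^*$ (with $y_v$ in place of $\hat y_v$) and bounding $\tilde v_2$, whereas you invoke $\linf$-stability of $a/(s+a)$ directly; your explicit remark that PE of $i$ transfers to $\phi_R$ through the stable minimum-phase filter is a point the paper leaves to ``standard adaptive control arguments.''
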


\begin{proof}
From \eqref{h} and \eqref{yv} we have the relationship $x_1 = {y \over y_v}$, which is well defined in view of \eqref{yvneqzer}. Computing the derivative with respect to time yields
$$
{d \over dt}\bigg( {y \over y_v} \bigg) = R{ y \over k}+u .
$$
Applying to the equation above the linear time invariant (LTI) filter ${a \over \bp+ a}$ yields
\begequ
\lab{linregr}
 {a \bp \over \bp+ a}\bigg[ {y \over y_v} \bigg] - { a \over \bp+ a} \big[u\big] = R { {a \over k} \over \bp+ a} \big[y\big]  + \et,
\endequ
where $\et$ is an exponentially decaying term stemming from the filters initial conditions. As shown in \cite{stastac17}, without loss of generality, this term is neglected in the sequel.

Notice now that \eqref{stareafil} is a state realization of the filters
\begequ
\lab{yrphir}
\begin{aligned}
    Y_R & = {a \bp \over \bp+ a}\bigg[ {y \over \hat y_v} \bigg] - { a \over \bp+ a} \big[u\big] \\
    \phi_R & =  {{a \over k}  \over \bp+ a} \big[y\big],
\end{aligned}
\endequ
Motivated by this fact define the auxiliary (ideal) dynamics
$$
\dot{v}^*_2  = - a v^*_2 + a \bigg( {y\over {y}_v}\bigg).
$$
and the signal
$$
Y_R^*:= -v_1 + a{y\over {y}_v} - av^*_2,
$$
and notice that \eqref{linregr} may be written as
$$
Y^*_R=R \phi_R.
$$
Define the signal $\tilde Y_R:=Y_R-Y_R^*$, which upon replacement in \eqref{estimator2}, yields
\begequ
\label{R_error_dyn}
\dot{\Tilde{R}} = - \gamma_R \phi_R^2 \Tilde{R}+ \gamma_R \phi_R \Tilde{Y}_R,
\endequ
where we defined the resistance estimation error $\Tilde{R}:= \Hat{R} - R$. Exponential convergence {\em to zero} of the unperturbed dynamics follows invoking the PE Assumption \ref{ass2} and standard adaptive control arguments \cite{SASBOD}.

To analyse the stability of \eqref{R_error_dyn} define the error $\tilde v_2:=v_2 - v_2^*$, whose dynamics is given as
\begequ
\lab{dotevtwo}
\dot{\tilde v}_{2} = - a \tilde v_2+ a y   \bigg({1 \over y_v} - {1 \over \hat{y}_v}\bigg).
\endequ
Now, we recall \eqref{yvneqzer}, from which we get the following inequality
$$
    \left|{1 \over y_v} - {1 \over \hat{y}_v}\right|  = \left|{ \tilde {y}_v \over y_v\hat{y}_v} \right|
                                                 \le {1 \over \ell^2} |\tilde{y}_v |,
$$
where we defined $\tilde {y}_v:= \hat{y}_v-y_v$. Using  \eqref{tilyv} and the inequality above, and invoking Assumption \ref{ass1} that ensures $|y| \leq \kappa$, from \eqref{dotevtwo} we conclude that
\begequ
\lab{limevtwo}
\lim_{t\to \infty} |\tilde v_2(t) | \le \mathcal{O}({\varepsilon}).
\endequ
The proof is completed noting that a similar property holds for $\tilde Y_R$ and invoking the exponential stability of the unperturbed dynamics.

\end{proof}

\begrem
The assumption that $i$ is PE is not restrictive at all. Actually it is possible to show that this condition can be transferred to the control $u$.\footnote{The details of this proof are omitted for brevity.} Now, since $u$ defined in \eqref{siginj} contains an additive term that is PE, the condition that $u$ is PE will almost always be satisfied.
\endrem
\subsection{Flux observer}
\lab{subsec42}
Before presenting our observer notice that the flux $x_1$ admits the following algebraic observer
\begequ
\label{algobs}
 \Hat{x}_1 = {y \over \hat y_v}.
\endequ
Unfortunately, due to the division operation, such a design is relatively sensitive to measurement noise, making it non-robust.\footnote{In the resistance estimator, although such relationship is used, the LTI filter and the closed-loop gradient descent dynamics reduce the deleterious effects significantly.} To overcome this drawback, we propose below a closed-loop flux observer design.

\begin{proposition}
\label{prop3}
Consider the system \eqref{model} with measurable output \eqref{h} and input \eqref{siginj} verifying Assumptions \ref{ass1} and \ref{ass2}. Define the flux observer
\begequ
\label{flux_obs}
    \dot{\hat{x}}_1  = { \hat{R} \over k} y + u - \gamma_\lambda  (y - \hat y_v \hat x_1).
\endequ
with $\gamma_\lambda>0$ and $\hat y_v$, $\hat x_1$ generated as in Propositions \ref{prop1} and \ref{prop3}, respectively. Assume the current $i$ verifies \eqref{ipe}. Then,
$$
\lim_{t\to\infty} \big| \Hat{x}_1(t) - x_1(t)\big|  \le \mathcal{O}({\varepsilon}) \quad \text{(exp.)}
$$
\end{proposition}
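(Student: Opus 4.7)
The plan is to form the flux error $\tilde{x}_1 := \hat{x}_1 - x_1$ and show that its dynamics reduces to a scalar linear time‑varying system with a uniformly negative coefficient, driven by a perturbation that is the sum of exponentially decaying terms and an $\mathcal{O}(\varepsilon)$ residual inherited from Propositions \ref{prop1} and \ref{prop2}.

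First I would write down the true flux dynamics by combining $\dot{x}_1 = \dot\lambda = -Ri+u$ with $i=-y/k$, obtaining $\dot{x}_1 = \tfrac{R}{k}y+u$. Subtracting this from \eqref{flux_obs} gives
\begequ
\nonumber
\dot{\tilde{x}}_1 = \tfrac{1}{k}\tilde{R}\,y - \gamma_\lambda\big(y-\hat{y}_v\hat{x}_1\big),
\endequ
with $\tilde{R}:=\hat{R}-R$. The next step is the key algebraic manipulation: exploiting the identity $y=x_1 y_v$ (which follows from \eqref{h} and \eqref{yv}) and adding and subtracting $\hat{y}_v x_1$, one rewrites the innovation as
\begequ
\nonumber
y-\hat{y}_v\hat{x}_1 = -x_1\tilde{y}_v - \hat{y}_v\tilde{x}_1,
\endequ
where $\tilde{y}_v:=\hat{y}_v-y_v$. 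Hence
\begequ
\nonumber
\dot{\tilde{x}}_1 = \gamma_\lambda\hat{y}_v\tilde{x}_1 + \gamma_\lambda x_1\tilde{y}_v + \tfrac{1}{k}\tilde{R}\,y.
\endequ

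The unperturbed part is a scalar LTV system with state‑transition gain $\gamma_\lambda\hat{y}_v(t)$. Here I would invoke the projection \eqref{yvneqzer}, which enforces $\hat{y}_v(t)\le\ell<0$ for all $t$, so that $\gamma_\lambda\hat{y}_v(t)\le\gamma_\lambda\ell<0$ uniformly. This delivers uniform exponential stability of the homogeneous error dynamics directly, \emph{without} invoking any PE argument on this equation itself.

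For the forcing term, I would bound each summand in turn. Assumption \ref{ass1} gives $|x_1|\le\kappa$ and $|y|\le\kappa$. Proposition \ref{prop1} yields $|\tilde{y}_v(t)|\le\mathcal{O}(\varepsilon)$ exponentially, while Proposition \ref{prop2} (which is where the PE hypothesis \eqref{ipe} is actually used) yields $|\tilde{R}(t)|\le\mathcal{O}(\varepsilon)$ exponentially. Therefore the perturbation is a sum of an exponentially decaying signal and a term of order $\mathcal{O}(\varepsilon)$. A standard input‑to‑state stability/vanishing perturbation argument applied to the exponentially stable scalar LTV system then yields \eqref{tilyv}'s analog for $\tilde{x}_1$, namely $\limsup_{t\to\infty}|\tilde{x}_1(t)|\le\mathcal{O}(\varepsilon)$ with exponential rate.

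No step is truly difficult; the only subtle point is the use of the projection \eqref{yvneqzer} to secure the sign of the feedback gain in the error equation. That is precisely what makes the closed‑loop observer \eqref{flux_obs} preferable to the algebraic inverse \eqref{algobs}: the gain $\gamma_\lambda\hat{y}_v$ is uniformly negative and attenuates both measurement noise and the inherited $\mathcal{O}(\varepsilon)$ perturbations from $\hat{y}_v$ and $\hat{R}$.
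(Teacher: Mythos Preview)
Your argument is correct and follows the same line as the paper's proof: derive the scalar error ODE, use \eqref{yvneqzer} to obtain a uniformly negative gain, and bound the forcing terms via Propositions~\ref{prop1}--\ref{prop2} and Assumption~\ref{ass1}. The only cosmetic difference is that the paper splits the innovation as $y-\hat y_v\hat x_1=(y-y_v\hat x_1)-\tilde y_v\hat x_1$, obtaining the stabilizing gain $\gamma_\lambda y_v$ and a perturbation proportional to $\hat x_1$ (whose boundedness must then be argued separately from \eqref{flux_obs}), whereas your split yields the gain $\gamma_\lambda\hat y_v$ with a perturbation proportional to the already bounded $x_1$; both variants rest on \eqref{yvneqzer} and are equivalent.
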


\begin{proof}
From \eqref{flux_obs} we get
\begalis{
    \dot{\hat{x}}_1 & = { {R} \over k} y + u - \gamma_\lambda  (y - \hat y_v \hat x_1)+ { \tilde{R} \over k} y \\
        & = \dot x_1 - \gamma_\lambda  (y - y_v \hat x_1)+ { \tilde{R} \over k} y +\gamma_\lambda \tilde y_v \hat x_1.
}
Define the flux estimation error $\Tilde{x}_1 := \Hat{x}_1-x_1$, whose dynamics is given as
\begalis{
    \dot{\Tilde{x}}_1 & = \gamma_\lambda y_v\Tilde{x}_1 + { \tilde{R} \over k} y +\gamma_\lambda \tilde y_v \hat x_1.
}
From \eqref{yvneqzer} we see that the unperturbed dynamics is exponentially stable and, moreover, from \eqref{flux_obs} we have that $\hat x_1$ is bounded. The proof is completed using these two properties and invoking Propositions \ref{prop1} and \ref{prop2}.
\end{proof}
\subsection{Position and momenta observer}
\lab{subsec43}
Given the definition of the virtual output \eqref{yv}, we trivially obtain an algebraic observer for $x_2$ as follows
\begequ
\label{obs_pos}
    \Hat{x}_2 = \hat y_v.
\endequ

To obtain an observer for the momenta $x_3$ we follow the Kazantzis-Kravaris-Luenberger (KKL) methodology \cite{romauto17} in the proposition below.

\begin{proposition}
\label{prop4}
Consider the system \eqref{model} with measurable output \eqref{h} and input \eqref{siginj} verifying Assumptions \ref{ass1} and \ref{ass2}. Define the momenta observer
\begali{
\nonumber
  \dot{z} & =-{\gamma_p \over m} z + {1 \over 2k} \hat x_1^2 - {\gamma_p^2 \over m}\hat y_v - mg\\
    \Hat{x}_3 & = z+ \gamma_p \hat {y}_v,
\label{kkl}
}
with $\gamma_p>0$ and $\hat y_v$, $\hat x_1$ generated as in Propositions \ref{prop1} and \ref{prop3}, respectively.  Then,
$$
\lim_{t\to\infty} \big| \Hat{x}_3(t) - x_3(t)\big|  \le \mathcal{O}({\varepsilon}) \quad \text{(exp.)}
$$
\end{proposition}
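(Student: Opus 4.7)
The plan is to recognize \eqref{kkl} as a standard KKL observer written in the $z$-coordinates and to show that the change of variables $\xi := x_3 - \gamma_p x_2$ brings the $x_3$-dynamics into a form that the $z$-dynamics is trying to copy. Indeed, using $\dot x_2 = x_3/m$ and the $x_3$-equation in \eqref{fg}, a direct computation gives
\begin{equation*}
\dot \xi \;=\; \dot x_3 - \gamma_p \dot x_2 \;=\; -{\gamma_p \over m}\xi + {1\over 2k}x_1^2 - {\gamma_p^2 \over m}x_2 - mg,
\end{equation*}
which is precisely the vector field driving $z$ in \eqref{kkl}, but evaluated at the \emph{true} signals $(x_1,x_2)$ instead of the estimates $(\hat x_1,\hat y_v)$. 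This identification makes the analysis transparent.

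Define the auxiliary error $\tilde z := z - \xi$ and, as in the statement, $\tilde x_3 := \hat x_3 - x_3$. Subtracting the two dynamics and using $y_v = x_2$ yields
\begin{equation*}
\dot {\tilde z} \;=\; -{\gamma_p\over m}\tilde z \;+\; {1\over 2k}\bigl(\hat x_1^2 - x_1^2\bigr) \;-\; {\gamma_p^2 \over m}\tilde y_v,
\end{equation*}
while the output map gives the algebraic relation
\begin{equation*}
\tilde x_3 \;=\; z + \gamma_p \hat y_v - x_3 \;=\; \tilde z + \gamma_p\, \tilde y_v.
\end{equation*}
So once I control $\tilde z$, the desired bound on $\tilde x_3$ follows immediately via the triangle inequality and Proposition~\ref{prop1}.

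For the $\tilde z$ dynamics, the unperturbed system is the scalar exponentially stable equation $\dot{\tilde z} = -(\gamma_p/m)\tilde z$. The perturbation term $\hat x_1^2 - x_1^2 = (\hat x_1 + x_1)\tilde x_1$ must be shown to be $\mathcal{O}(\varepsilon)$; this uses Assumption~\ref{ass1} to bound $x_1$ and the structure of \eqref{flux_obs} (an LTI system driven by bounded inputs $y$, $u$, $\hat R$ and by $\hat y_v \hat x_1$ with $\hat y_v \le \ell < 0$) to guarantee that $\hat x_1$ itself stays bounded, after which Proposition~\ref{prop3} gives $|\tilde x_1| \le \mathcal{O}(\varepsilon)$ asymptotically. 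The second perturbation is directly $\mathcal{O}(\varepsilon)$ by Proposition~\ref{prop1}. Standard vanishing-perturbation arguments for scalar exponentially stable linear systems then deliver $\limsup_{t\to\infty}|\tilde z(t)| \le \mathcal{O}(\varepsilon)$ with exponential rate of approach, and the conclusion for $\tilde x_3$ follows.

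The only delicate point I anticipate is verifying that $\hat x_1$ is uniformly bounded, so that the quadratic term $\hat x_1^2 - x_1^2$ can legitimately be treated as a first-order $\mathcal{O}(\varepsilon)$ perturbation rather than a potentially destabilizing nonlinearity; everything else is a routine combination of the three preceding propositions with elementary linear-systems perturbation analysis.
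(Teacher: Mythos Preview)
Your proof is correct and follows essentially the same approach as the paper: the paper defines $T:=x_3-\gamma_p y_v$ (your $\xi$), derives the same $\dot T$ equation, forms the error $T-z$ (your $-\tilde z$), and invokes exponential stability of the unperturbed scalar dynamics together with Propositions~\ref{prop1} and~\ref{prop3} to bound the perturbation terms. If anything, you are slightly more explicit than the paper about why $\hat x_1$ stays bounded, which the paper asserts in one line.
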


\begin{proof}
Define the signal
\begequ
\lab{t}
T:= x_3 - \gamma_p y_v.
\endequ
Notice that, from the second equation of \eqref{kkl} and \eqref{t}, we get
\begequ
\lab{tilx3}
\hat x_3 -x_3=z-T+\gamma_p \tilde y_v.
\endequ
As usual in KKL observers, the gist of the proof is to show that $z$ ``approaches" $T$. Now, differentiating \eqref{t} we have
$$
\dot T = -{\gamma_p \over m} T - {\gamma_p^2 \over m}y_v + {1 \over 2k}x_1^2 - mg.
$$
Using the first equation of \eqref{kkl} we get
$$
\dot T  - \dot z =- {\gamma_p \over m} (T - z)+-{\gamma_p^2 \over m} \tilde y_v - {1 \over 2k}(\hat x_1 \tilde x_1 -\tilde x_1^2)
$$
From the equation above we conclude that
$$
\lim_{t\to\infty} \big| z(t)  - T(t) | \le \mathcal{O}({\varepsilon}) \quad \text{(exp.)}
$$
The proof is completed replacing \eqref{tilyv} and the limit above in \eqref{tilx3}.
\end{proof}

\begin{remark}
An alternative to the KKL observer above is a standard Luenberger observer
\begalis{
    \dot z_1 & = {1 \over m} z_2 + c_1(\Hat{x}_2 - z_1) \\
    \dot z_2 & = {1 \over 2k} \Hat{x}_1^2 - mg + c_2(\Hat{x}_2 - z_2)\\
    \hat x_3 &= z_2,
}
with $c_1>0$ and $c_2>0$. However, the order of such a design is higher than that of the KKL observer \eqref{kkl}. Moreover, as shown in Section \ref{subsec61} it was observed in simulations that the KKL observer outperforms the Luenberger one and is easier to tune.
\end{remark}
%
\section{Adaptive Observer and Sensorless Controller}
\label{sec5}
%
To solve the adaptive state observation of Section \ref{sec2} we summarize in this section the derivations presented in the previous section. Also, we propose a sensorless controller.

\begin{proposition}
\label{prop5}
Consider the system $\dot x=f(x)+gu,\;y=h(x)$, with \eqref{fg} and \eqref{h}, with input \eqref{siginj} verifying Assumptions \ref{ass1} and \ref{ass2}.  The $7$-order adaptive observer \eqref{adaobs} with mappings
$$
\begin{aligned}
F & =
\left[
\begin{aligned}
&  -\gamma S(t-d) [Y(t) -S(t-d)\chi_{1}] \\
&  - a\chi_{2} + au\\
&  - a\chi_{3} + a \left({y \over \varepsilon \chi_{1}}\right) \\
&  - a \chi_{4} + {a\over k} y \\
&  \gamma_R \chi_{4} \left( \chi_{2} + a{y \over \varepsilon\chi_{1}} - a \chi_{3} - \chi_{4}\chi_{5} \right)\\
&  -{1\over k}y \chi_{5} + u - \gamma_\lambda \left( y -\varepsilon\chi_{1}\chi_{6}\right) \\
&   {\gamma_p \over m} \chi_{7} + {2\over 2k} \chi_{6}^2 - {\gamma_p^2 \over m}\varepsilon\chi_{1} - mg
\end{aligned}
\right] \\
H & =
\begin{bmatrix}
  \chi_{6} \\
  \varepsilon\chi_{1} \\
  \chi_{7} + \gamma_p \varepsilon\chi_{1}
\end{bmatrix},
\end{aligned}
$$
with $Y$ given in \eqref{Y} and $a,\gamma,\gamma_R,\gamma_\lambda,\gamma_p>0$, guarantees \eqref{x_conv}.
\end{proposition}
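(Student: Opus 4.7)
The plan is to recognize the seven-dimensional observer as nothing more than a state-space assembly of the four sub-observers already established in Propositions \ref{prop1}--\ref{prop4}, and then invoke those results in cascade. First I would identify the components of $\chi\in\rea^7$ with the internal variables of the preceding sections: $\chi_1$ plays the role of the scalar estimate $\hat\theta_2$ from the virtual-output filter \eqref{vir_out_filter}; $\chi_2,\chi_3,\chi_4$ realize the filters $v_1,v_2,\phi_R$ of \eqref{stareafil} (with the substitution $\hat y_v=\varepsilon\chi_1$); $\chi_5$ carries the resistance estimate $\hat R$ produced by \eqref{estimator2}; $\chi_6$ implements the closed-loop flux observer \eqref{flux_obs}; and $\chi_7$ is the auxiliary state $z$ of the KKL momentum observer \eqref{kkl}. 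A direct inspection of the mapping $F$ in the statement confirms, row by row, that substituting these identifications recovers the governing ODEs \eqref{vir_out_filter}, \eqref{estimator2}, \eqref{stareafil}, \eqref{flux_obs} and \eqref{kkl}. Similarly, the output map $H$ reproduces $\hat x_1=\chi_6$, $\hat x_2=\hat y_v=\varepsilon\chi_1$ and $\hat x_3=z+\gamma_p \hat y_v$ exactly as in \eqref{algobs}--\eqref{obs_pos} and \eqref{kkl}.

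Once the correspondence is established, the argument is a standard cascade. The virtual-output filter governing $\chi_1$ is driven only by the measurable signals $Y$ and $S$, hence Proposition \ref{prop1} applies without modification and yields $|\hat y_v - y_v|\to\mathcal{O}(\varepsilon)$ exponentially. Feeding this estimate into the block $(\chi_2,\chi_3,\chi_4,\chi_5)$, which is precisely the resistance identifier of Section \ref{subsec41}, Proposition \ref{prop2} delivers $|\hat R - R|\to\mathcal{O}(\varepsilon)$, where Assumption \ref{ass2} supplies the PE property needed to render the unperturbed $\tilde R$-dynamics exponentially stable. With $\hat R$ and $\hat y_v$ both at hand, the dynamics of $\chi_6$ coincides with \eqref{flux_obs}, so Proposition \ref{prop3} yields $|\hat x_1 - x_1|\to\mathcal{O}(\varepsilon)$. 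The algebraic identity $\hat x_2=\hat y_v$ immediately propagates the bound \eqref{tilyv} to $x_2$. Finally, $\chi_7$ obeys the KKL recursion, and Proposition \ref{prop4} closes the chain with $|\hat x_3 - x_3|\to\mathcal{O}(\varepsilon)$. Combining the three component bounds produces \eqref{x_conv}.

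The only genuine technical point---and the one I would check most carefully---is that the $\mathcal{O}(\varepsilon)$ perturbations generated upstream do not spoil the input-to-state stability of the downstream blocks. This has already been done locally inside each of Propositions \ref{prop2}--\ref{prop4}: each error equation has an exponentially stable nominal part and is driven by a forcing term that is linear (with bounded coefficients, thanks to Assumption \ref{ass1} and the lower bound $\hat y_v\le\ell<0$ in \eqref{yvneqzer}) in the upstream estimation error. Consequently, the cascade admissibility is ensured by standard vanishing-perturbation arguments, exactly as invoked in the proofs of Propositions \ref{prop2}--\ref{prop4}, and the global $\mathcal{O}(\varepsilon)$ residual bound \eqref{x_conv} follows. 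No new analysis is required; Proposition \ref{prop5} is essentially a bookkeeping statement aggregating the four preceding results.
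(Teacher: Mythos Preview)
Your proposal is correct and follows exactly the approach taken in the paper: the paper's own proof consists of the single identification $\chi=\col(\hat\theta_2,v_1,v_2,\phi_R,\hat R,\hat x_1,z)$ together with an invocation of Propositions~\ref{prop1}--\ref{prop4}, which is precisely the cascade you spell out in detail. Your discussion of the ISS/perturbation point is a reasonable elaboration of what the paper leaves implicit.
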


\begin{proof}
The proof is established identifying
$$
\chi = \col(\hat{\theta}_2, v_1, v_2, \phi_R, \hat{R}, \hat{x}_1, z),
$$
and invoking the results of Propositions \ref{prop1}-\ref{prop4}.
\end{proof}

We are in position to give the sensorless control law, which is a certainty equivalence version of  the IDA-PBC given in \cite{ORTetalcsm}. Namely
\begali{
\nonumber
u_C & = -{1 \over k}\chi_5 y - K_p \bigg( {1\over \alpha} (\chi_6 -\lambda_{\star})  + (\varepsilon\chi_1 + c- q_{\star}) \bigg) \\
\lab{uc}
 &  - \bigg( {\alpha \over m} + K_p \bigg)  (\chi_{7}  + \gamma_p \varepsilon\chi_{1})
}
where $\lambda_\star$ and $q_\star$ are the desired values for $\lambda$ and $q$, respectively, and $K_p,\alpha>0$ are some tuning constants.
%
\section{Simulations and Experiments}
\lab{sec6}
%
In this section, the performance of the novel observer is validated via computer simulations and experiments. All simulations are conducted by Matlab/Simulink. The parameters used in the simulation and the control of the experimental rig are in Table \ref{tab:2}. The new design is compared, via simulations, with the one in \cite{yiscl18}. In both simulations and experiments, the desired equilibrium is $(\sqrt{2kmg}, q_{\star}, 0)$, with $q_{\star}$ taken as a pulse train, and with the initial states $(\sqrt{2kmg}, 0, 0)$.

\begin{table}[h]
\centering
\caption{Parameters of MagLev systems: Simulation (First Column) and Experiments (Second Column)}
\label{tab:2}
\renewcommand\arraystretch{1.6}
\begin{tabular}{l|r|r}
\hline\hline
 Ball mass [kg]&  0.0844&  0.0844  \\
 Gravitational acceleration [$\text{m/s}^2$] &9.81& 9.81\\
 Resistance [$\Omega$] &2.52 & 10.615 \\
 Position ($c$) [m] & 0.005& 0.0079\\
 Inductance constant ($k$) [$\mu$H$\cdot$m]& 6404.2 & 49950\\
 \hline\hline
\end{tabular}
\end{table}

\subsection{Performance of the observer}
\label{subsec61}
For a fair comparison with the observer design in \cite{yiscl18}, simulations are run with the state-feedback version of the controller \eqref{uc}, whose parameters are set as $K_p = 200.7, \alpha = 33.4$. To make simulations more realistic, we add measurement noise in the current $i$, which is generated with the ``Uniform Random Number'' block in Matlab/Simulink, within $[-0.003,0.003]$A.

The parameters in the proposed observer are selected as $A_0=1,\varepsilon=1/300, d=10\varepsilon, a=500, \gamma=3.89\times 10^3,  \gamma_R=500, \gamma_\lambda=8000, \gamma_p= 30$. The parameters of the design in \cite{yiscl18} are selected as $n=5,\varepsilon=1/300, \alpha=0.01,\gamma=50$.

Simulation results in Matlab/Simulink are shown in Figs. \ref{fig:virtual}-\ref{fig:state}, where $\hat{(\cdot)}^*$ denotes the results from the filter in \cite{yiscl18}. To compare the two momenta observers proposed in Subsection \ref{subsec63}, we have included that estimate $\hat{p}_L$, computed by the Luenberger observer \eqref{kkl}. As expected, the new design is less sensitive to measurement noise due to its closed-loop structure, and also, the steady-state observation are of the accuracy $\mathcal{O}(\varepsilon)$. Besides,  the KKL observer outperforms the Luenberger one.

\begin{figure}[]
    \centering
\includegraphics[width=4.2cm,height=3cm]{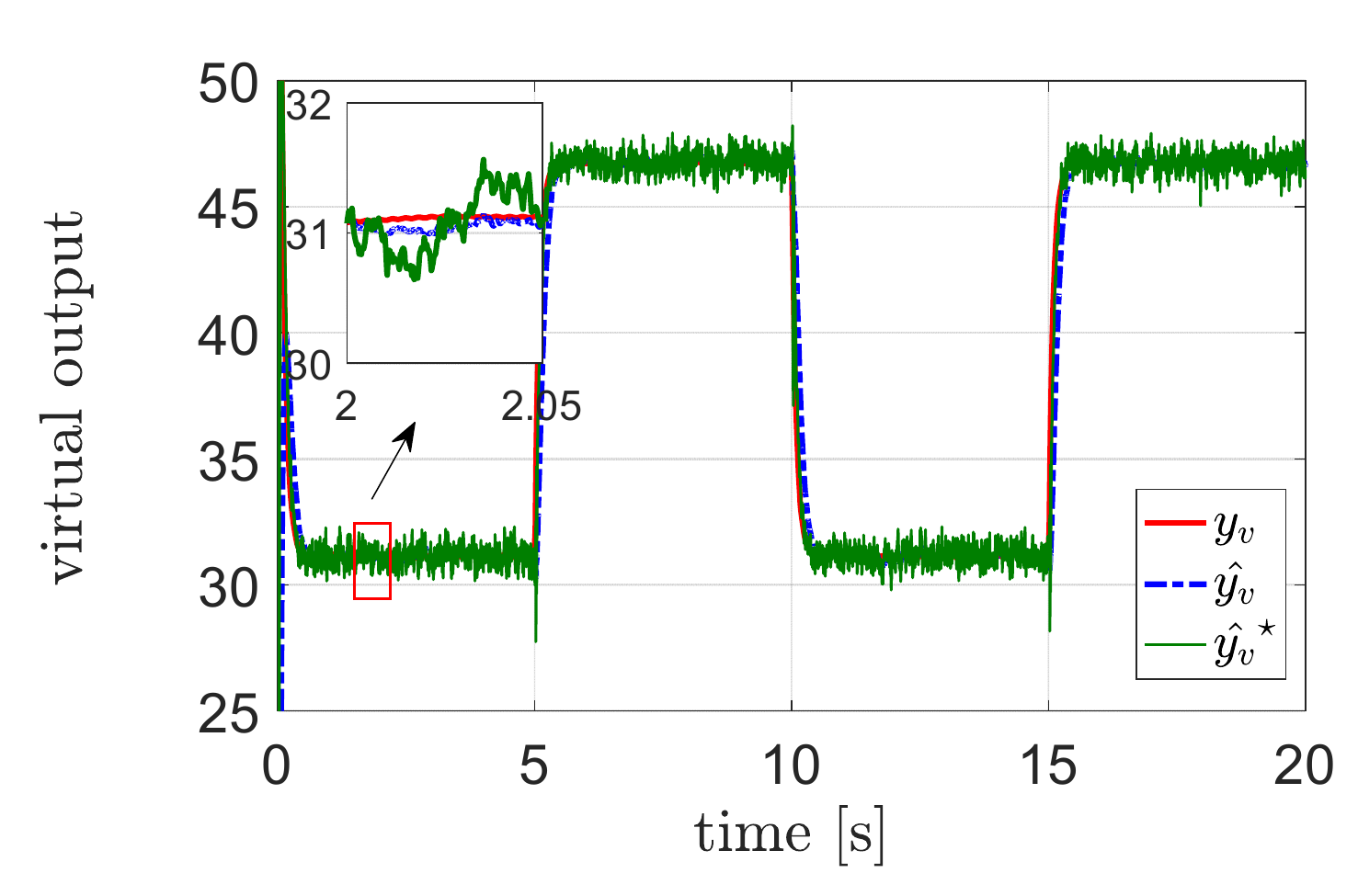}
\includegraphics[width=4.2cm,height=3cm]{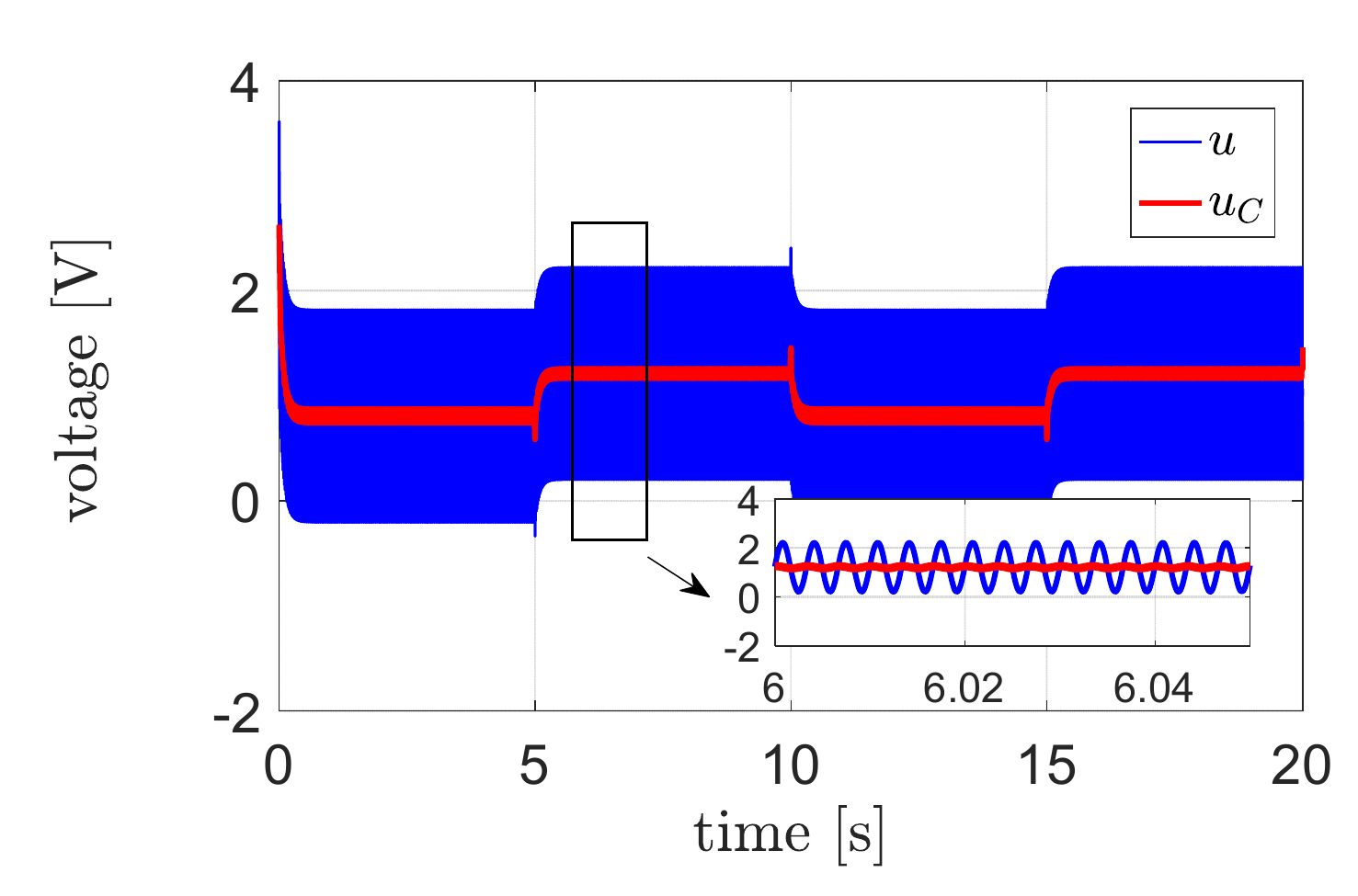}
    \caption{Virtual output estimation and input (simulation)}
    \label{fig:virtual}
\end{figure}
\begin{figure}[]
    \centering
\includegraphics[width=4.2cm,height=3cm]{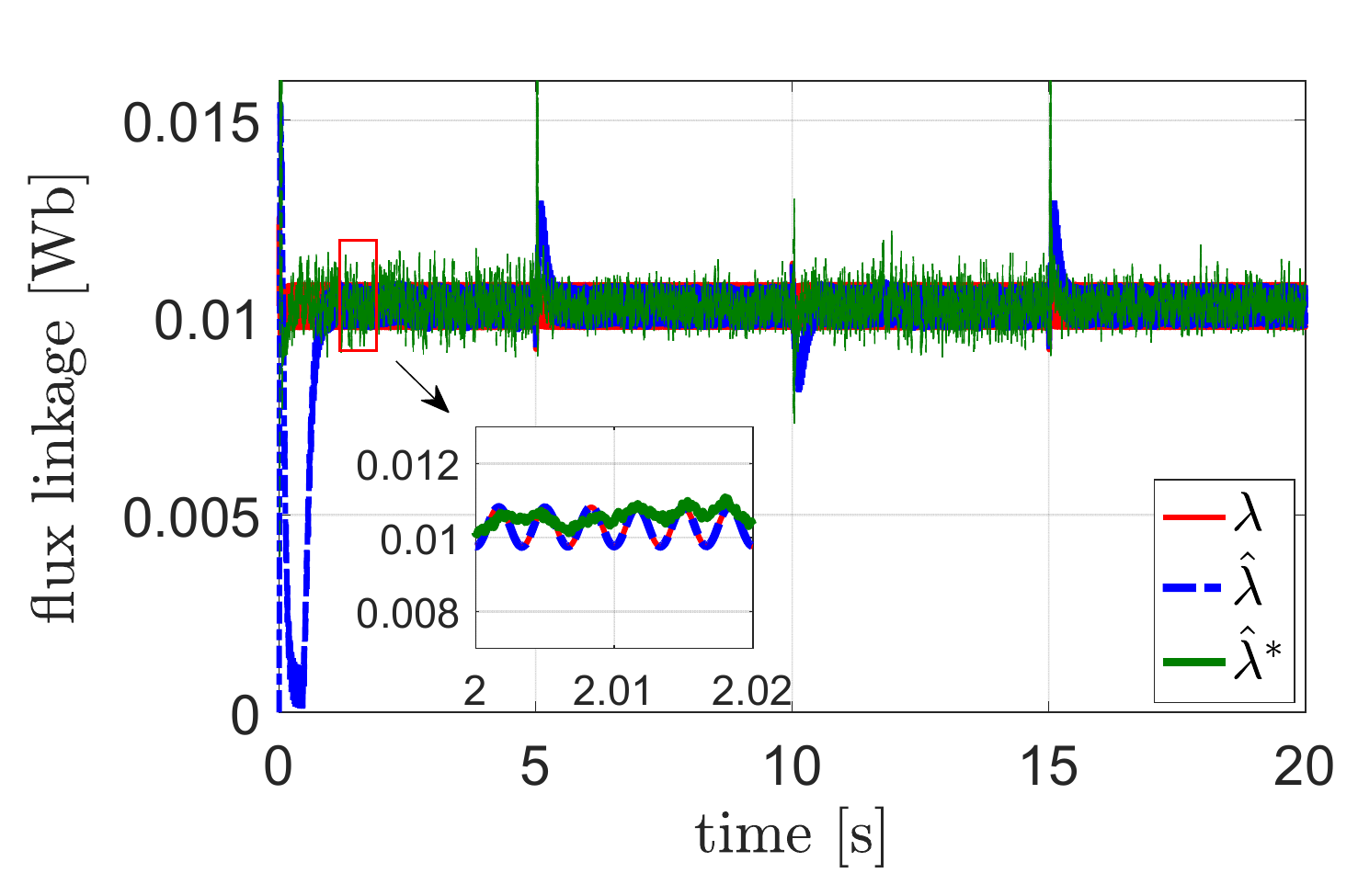}
\includegraphics[width=4.2cm,height=3cm]{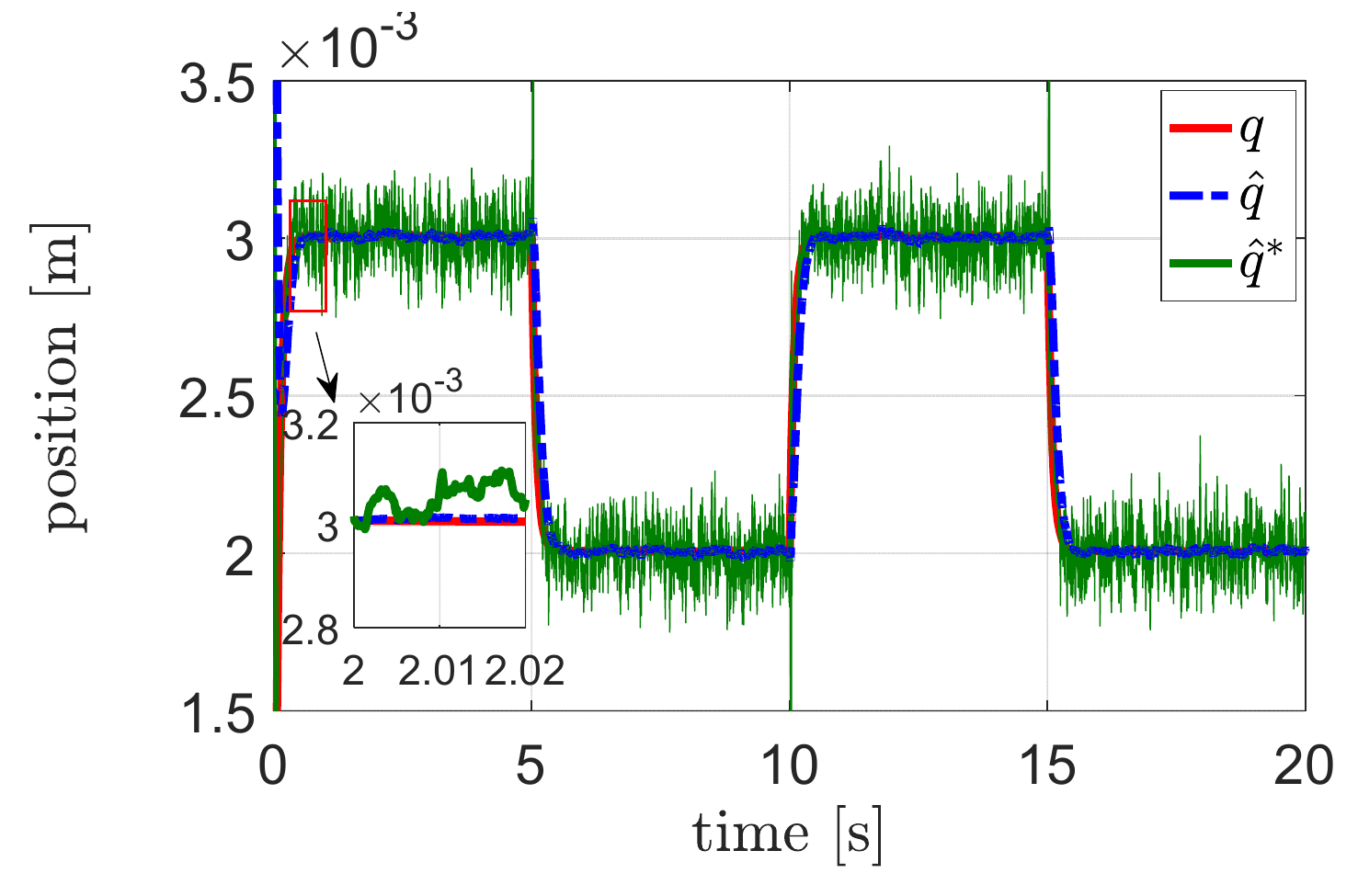}
\includegraphics[width=4.2cm,height=3cm]{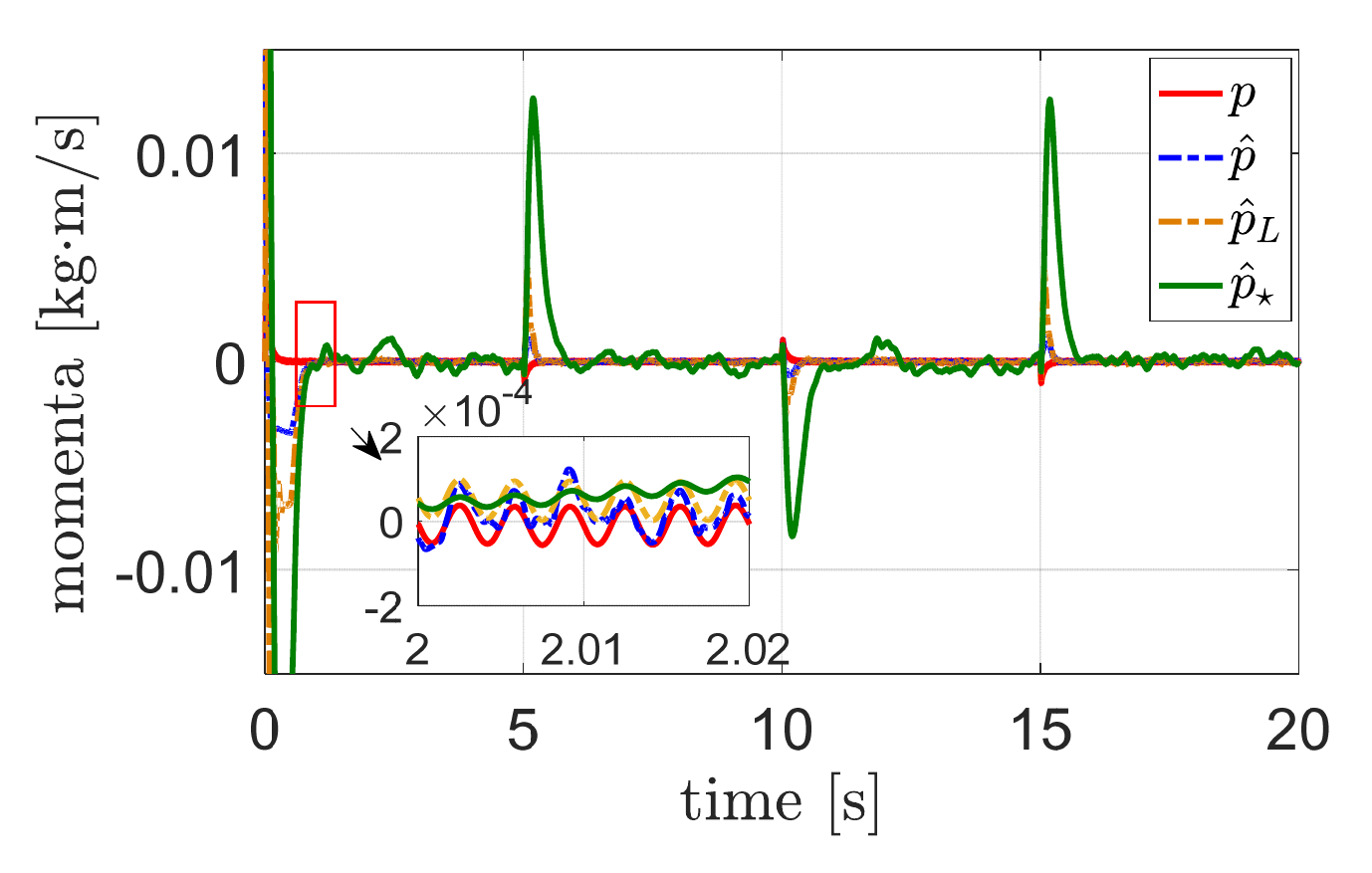}
\includegraphics[width=4.2cm,height=3cm]{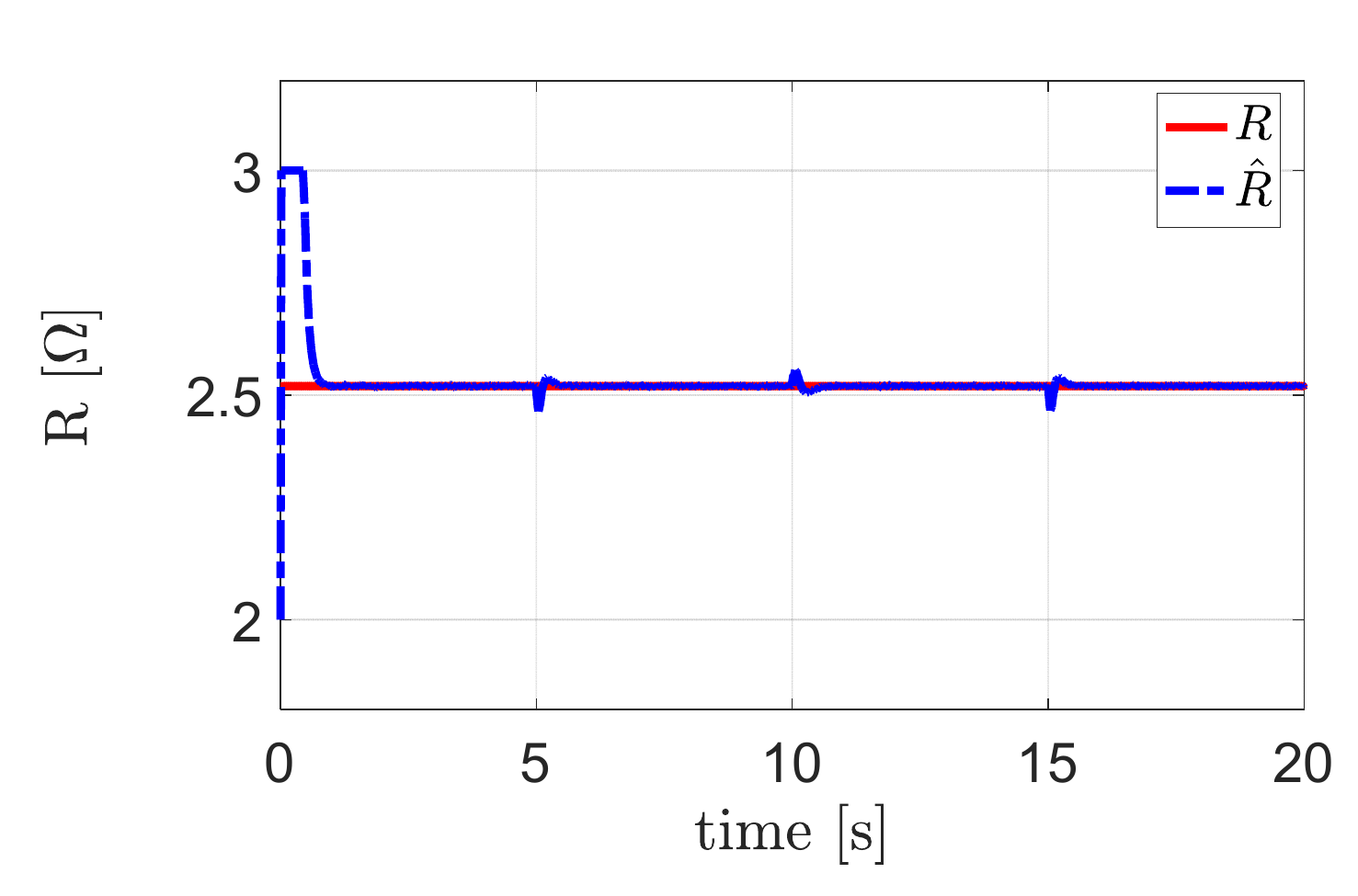}
    \caption{State and parameter estimations (simulation)}
    \label{fig:state}
\end{figure}

\subsection{Performance of the sensorless controller}
\label{subsec62}
In this section, we test the current feedback IDA-PBC law \eqref{uc} with the same parameters as those in Subsection \ref{subsec61}. We observe in Fig. \ref{fig:state-outputfeedback} that the position has a significant regulation error in the first second, which is due to the initial inaccurate estimation of $R$. However, the remaining transients are very satisfactory and almost identical to the state-feedback IDA-PBC.

\begin{figure}[]
    \centering
\includegraphics[width=4.2cm,height=3cm]{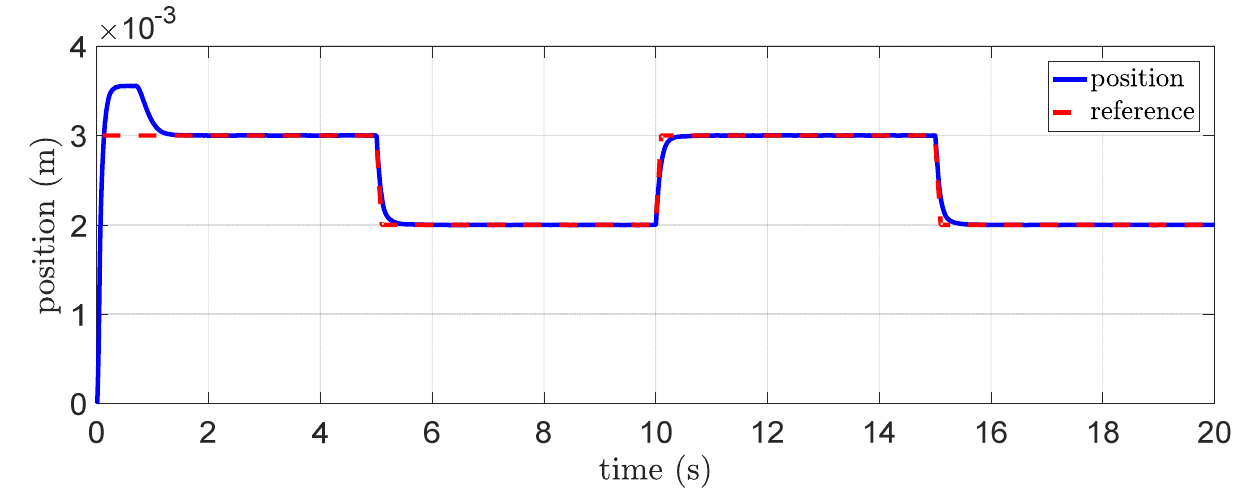}
\includegraphics[width=4.2cm,height=3cm]{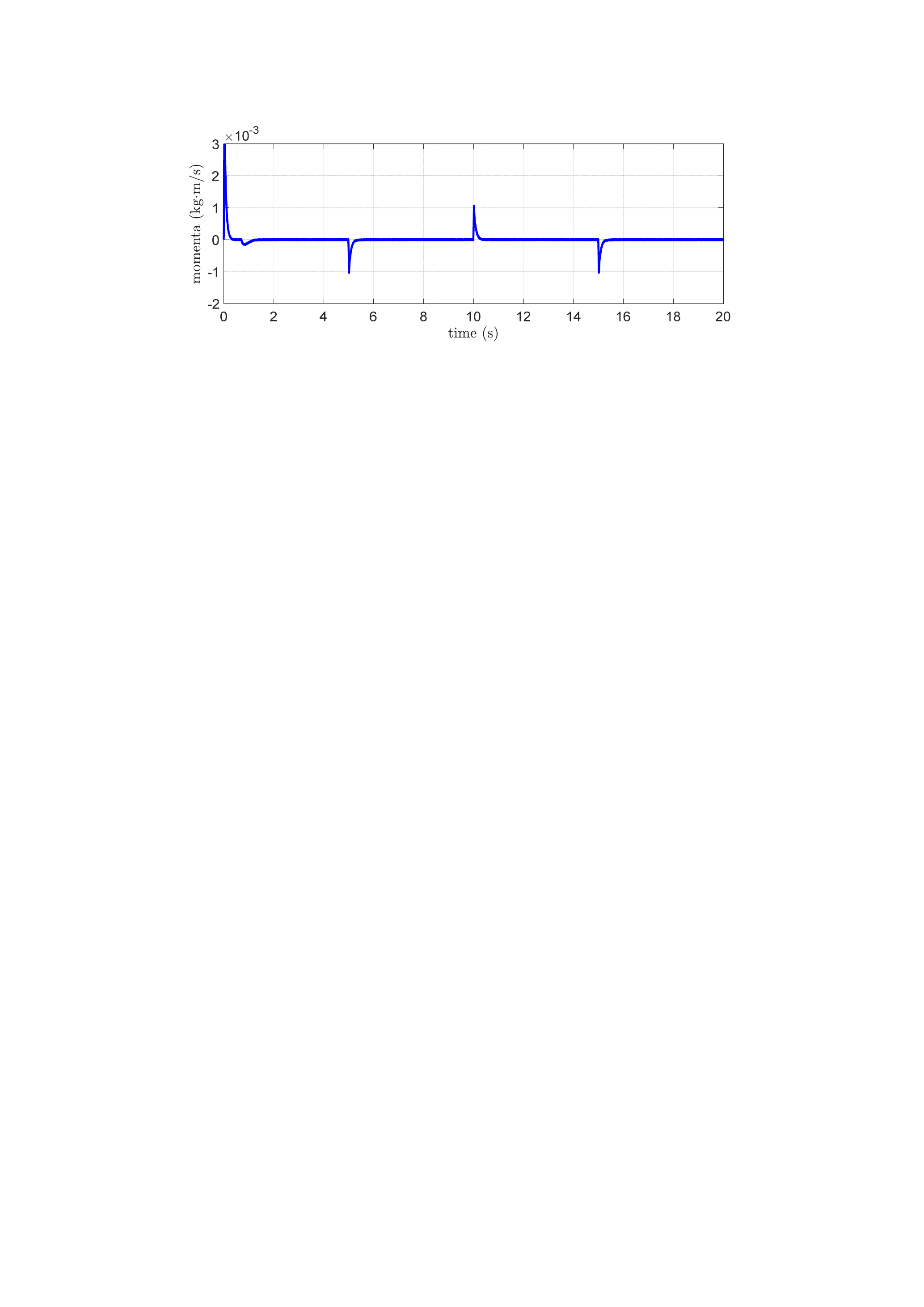}
    \caption{States with sensorless control \eqref{uc}}
    \label{fig:state-outputfeedback}
\end{figure}
\subsection{Experiments}
\label{subsec63}
Some experiments have been conducted on the experimental set-up of the 1-DOF MagLev system shown in Fig. \ref{fig:set-up}, which is located at the laboratory at D\'epartment Automatique, CentraleSup\'elec. The proposed adaptive observer was tested in closed-loop with the following well-tuned backstepping+integral controller
$$
\begin{aligned}
u_0 & = R(c-q)\sqrt{|\Upsilon|}\text{sign}(\Upsilon)  -  K_i \int_{0}^{t}(q - q_{\star} ) d\tau \\
\Upsilon(q,p) & = {2\over k} \big( mg - \gamma_1(p-p_{\star} ) - \gamma_2 m(q- q_{\star})\big),
\end{aligned}
$$
with $K_i=1,\;\gamma_1=340$ and $\gamma_2=3$. The parameters in the observer are taken as $A_0=1.5, \varepsilon=0.03, d=10 \varepsilon,a=10$ and $\gamma=360, \gamma_R=50, \gamma_\lambda=8\times 10^3, \gamma_p=20$.

The responses are shown in Figs. \ref{fig:state_exp}-\ref{fig:output-exp}, where we also give the position estimate $\hat{q}^*$ from the design in \cite{yiscl18}. Unfortunately, the device is only equipped with sensors for position and current. Hence, we can only compare the position estimate with its measured values, as well as the flux linkage estimate with its desired equilibrium. Again, we verify the accuracy and the robustness of the new observer in the presence of measurement noise. Fig. \ref{fig:freq} gives the position estimates with different probing frequencies. It illustrates that a higher frequency yields a higher accuracy, but at the price of a more jittery response.

\begin{figure}[]
    \centering
\includegraphics[width=8cm,height=3.5cm]{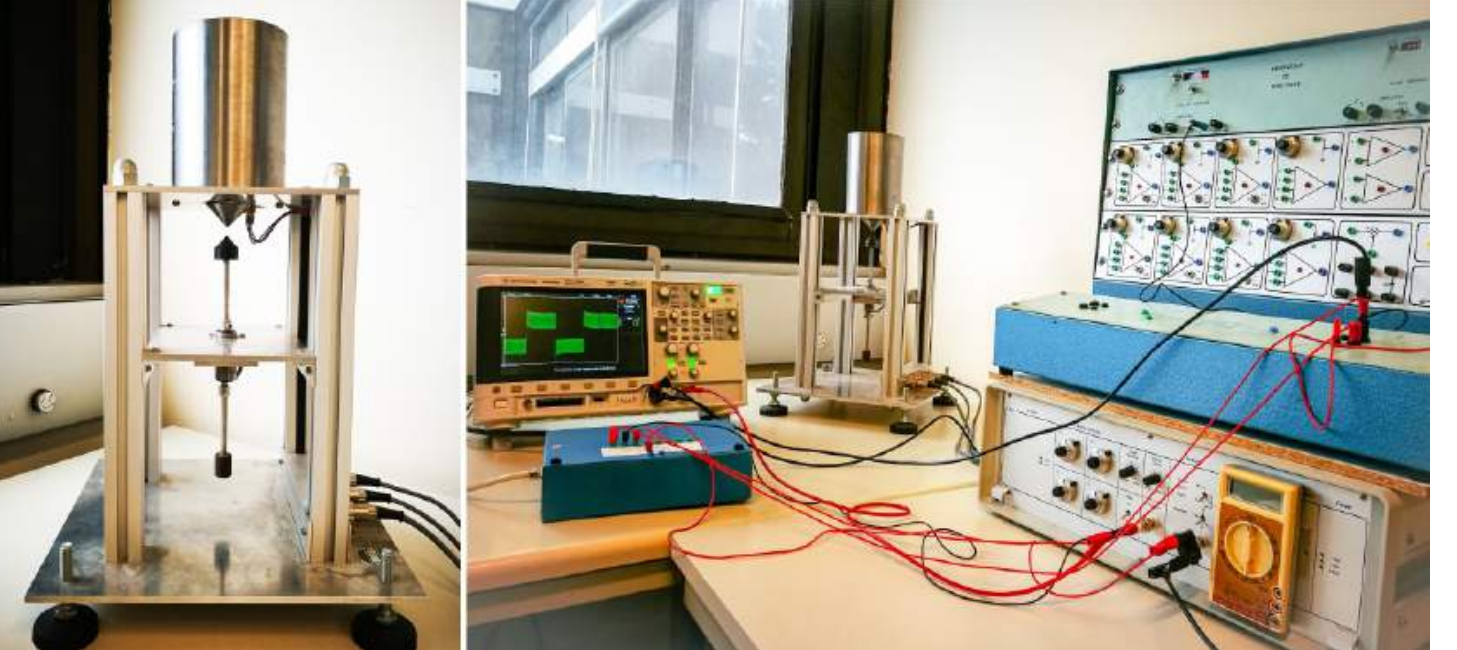}
    \caption{Experimental set-up}
    \label{fig:set-up}
\end{figure}

\begin{figure}[]
    \centering
\includegraphics[width=4.2cm,height=3cm]{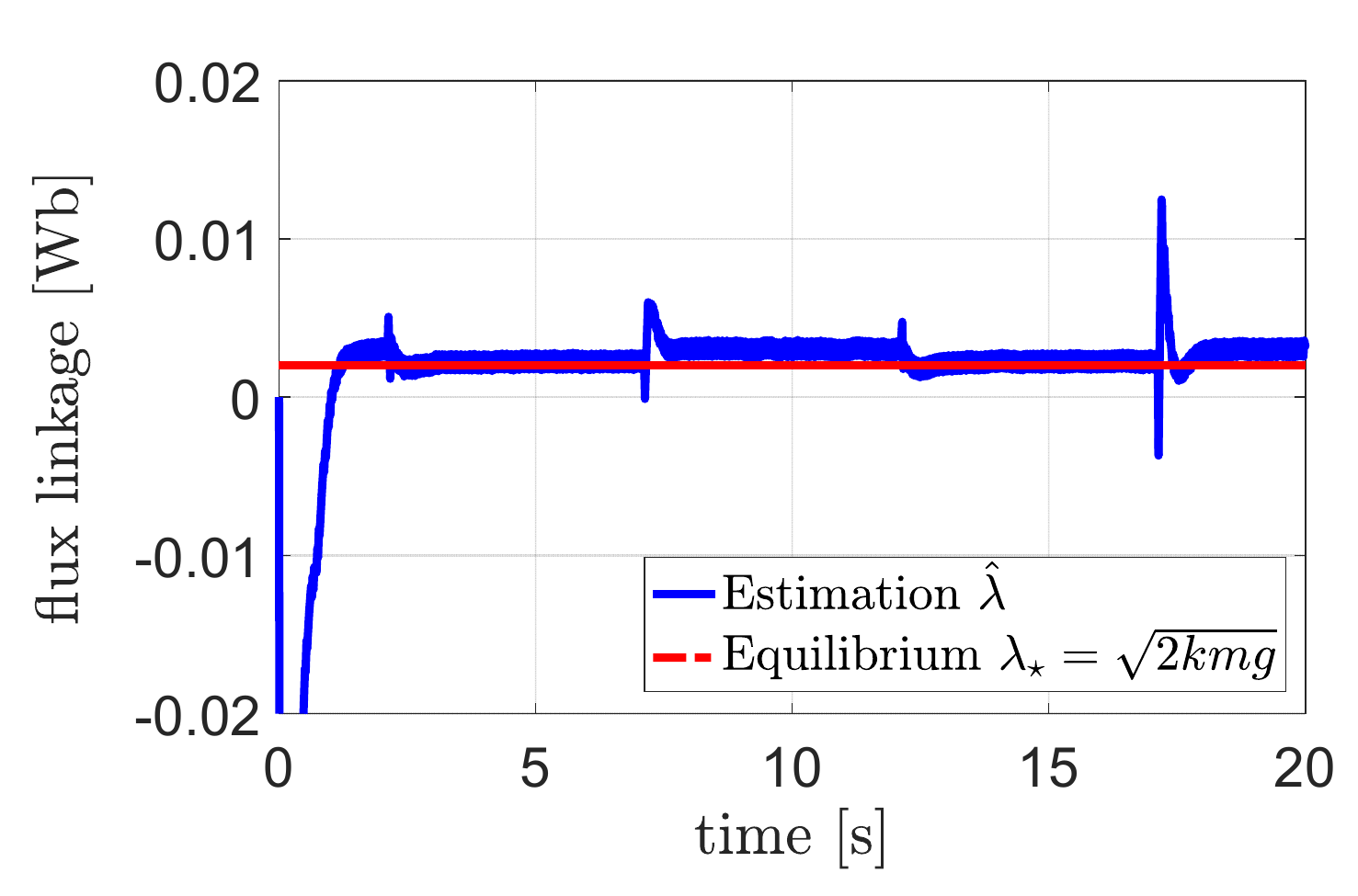}
\includegraphics[width=4.2cm,height=3cm]{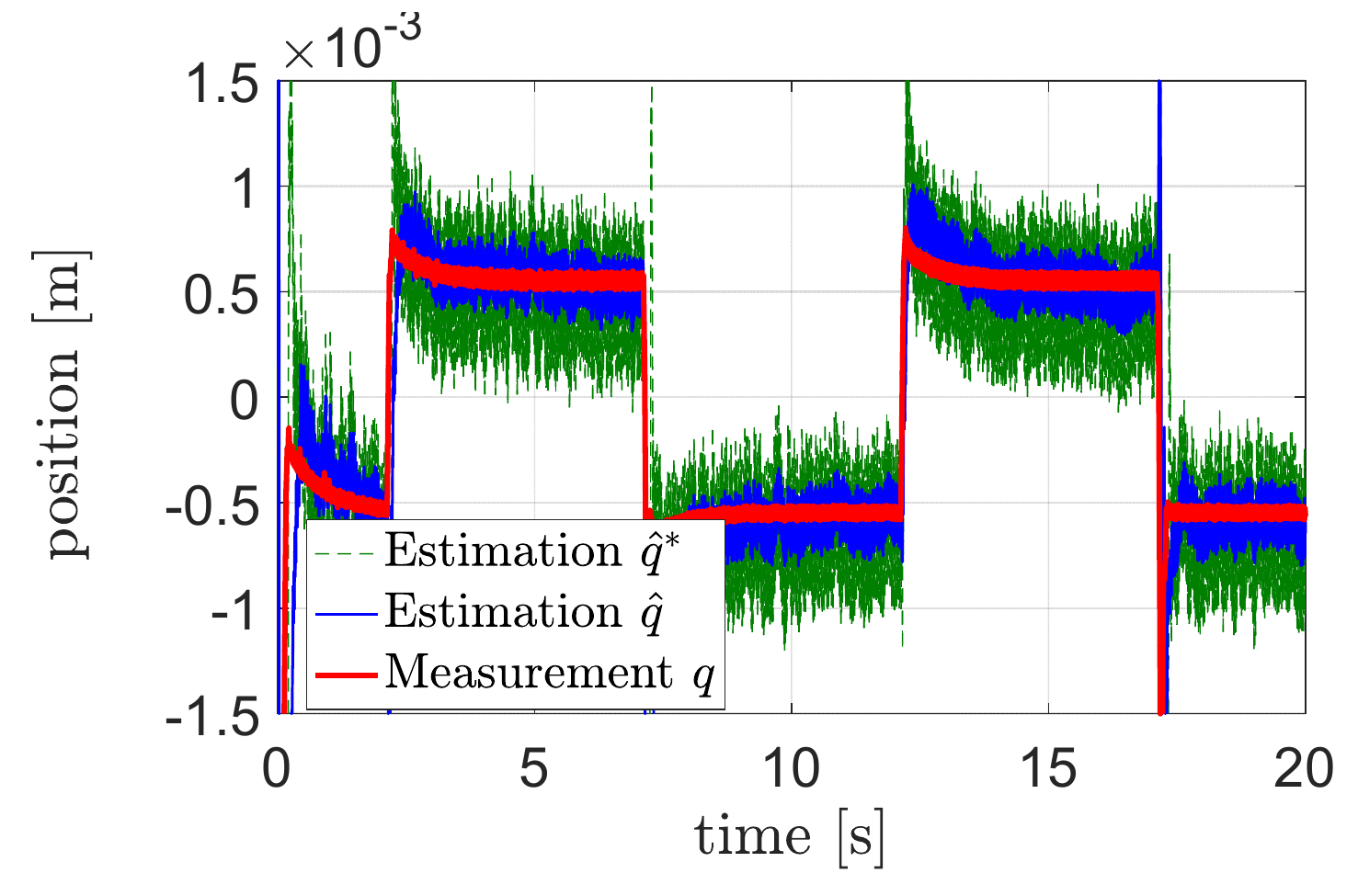}
\includegraphics[width=4.2cm,height=3cm]{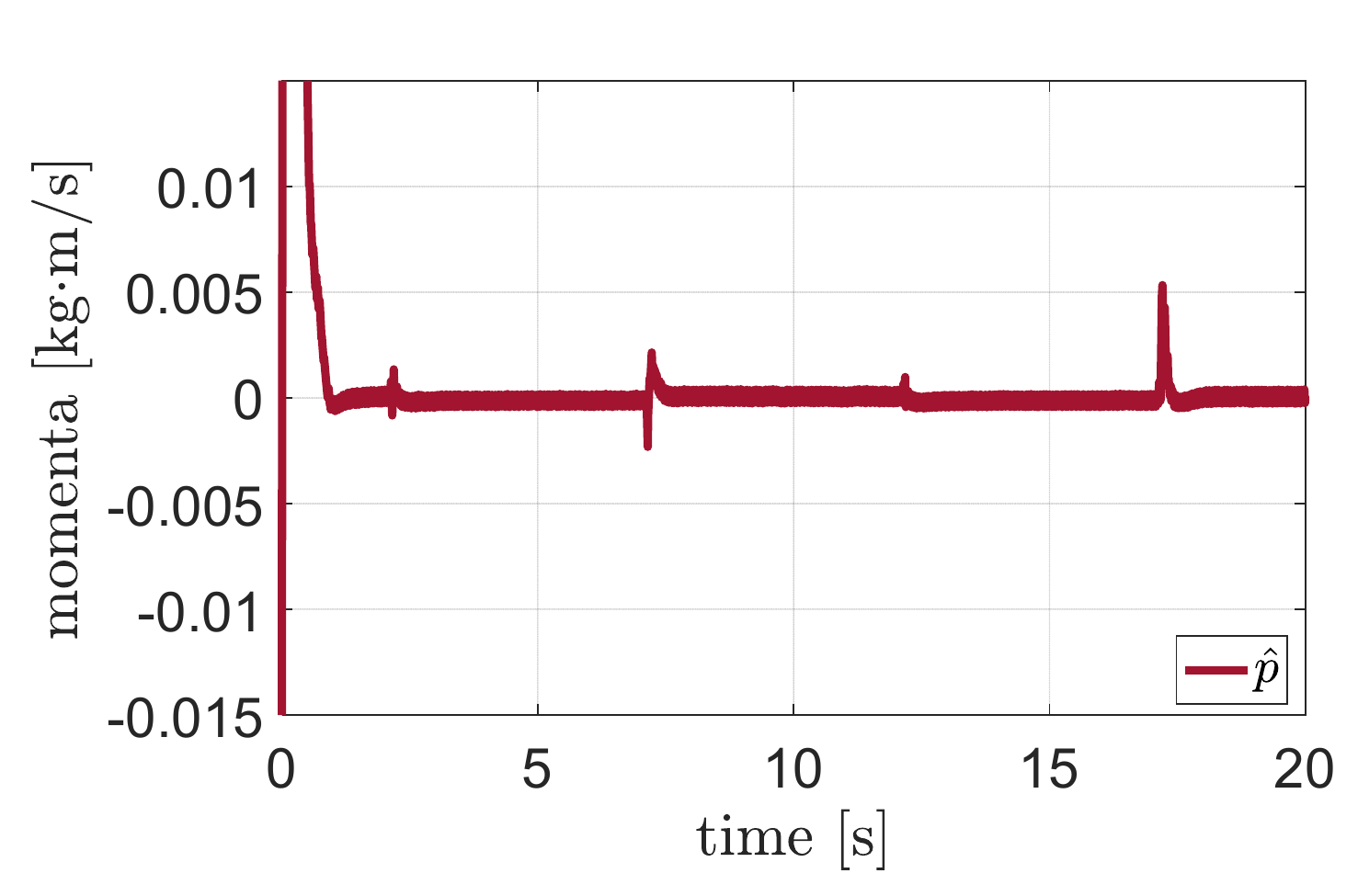}
\includegraphics[width=4.2cm,height=3cm]{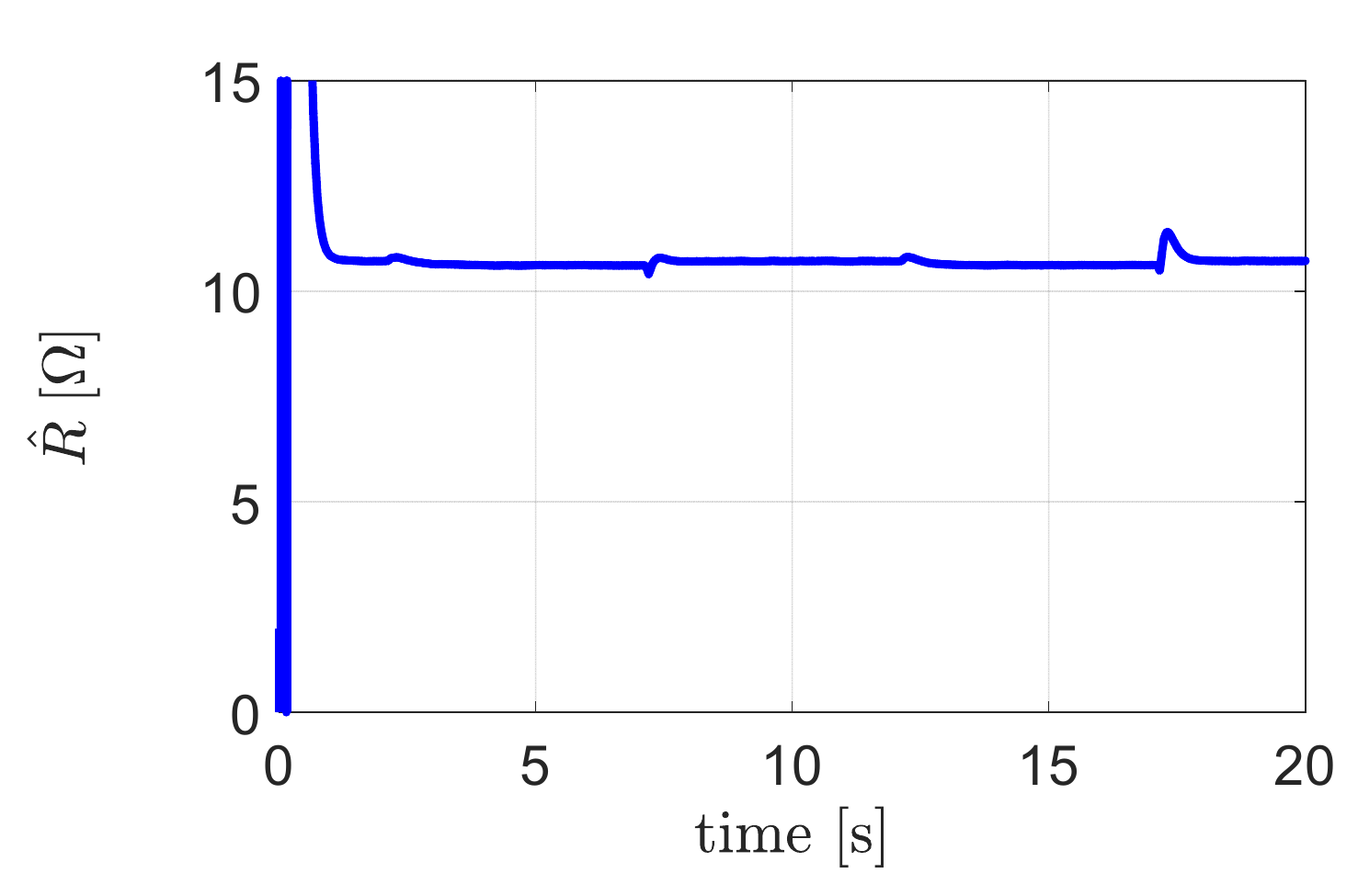}
    \caption{State and parameter estimations (experiment)}
    \label{fig:state_exp}
\end{figure}

\begin{figure}[]
    \centering
\includegraphics[width=4.2cm,height=3cm]{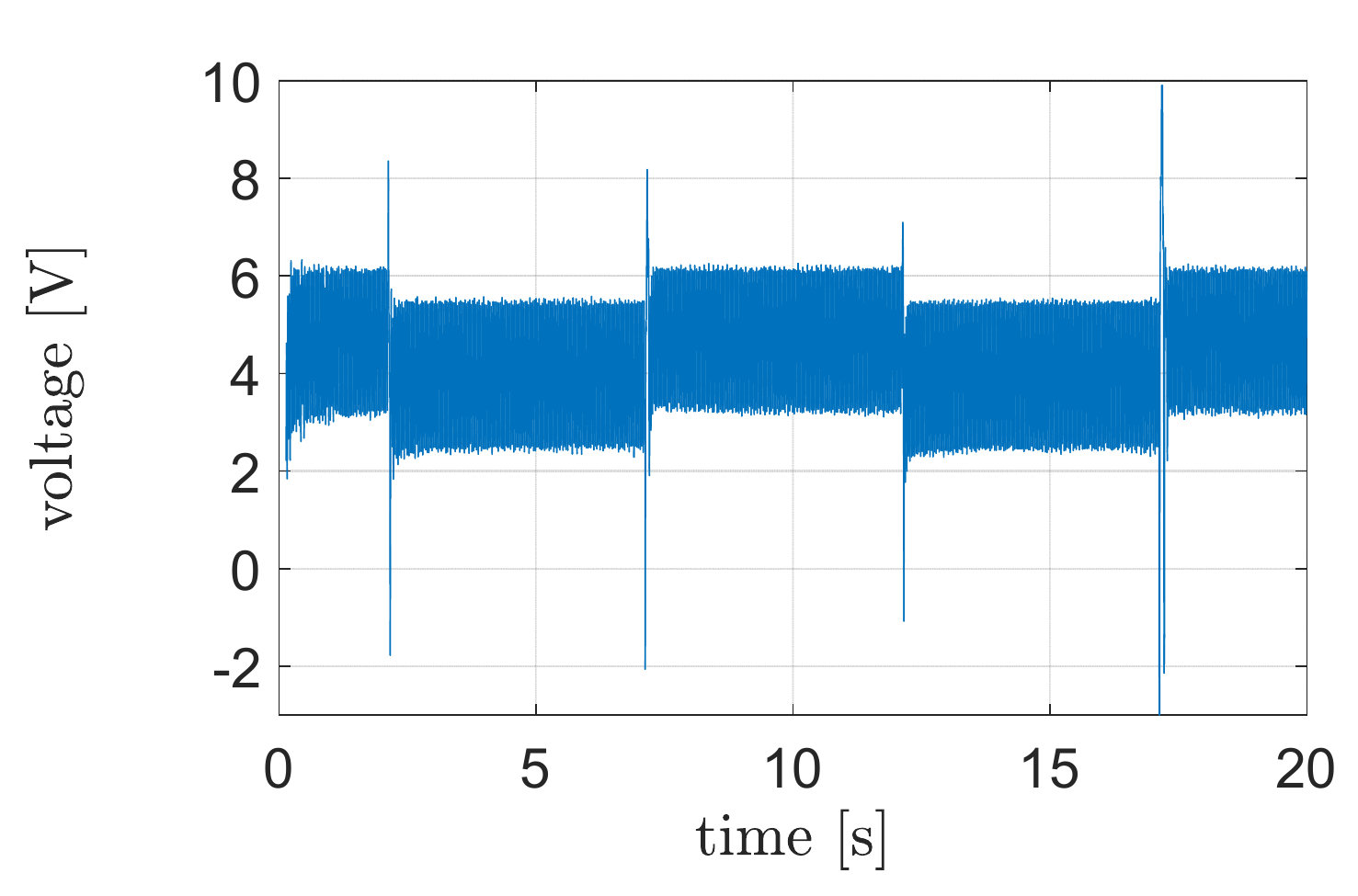}
\includegraphics[width=4.2cm,height=3cm]{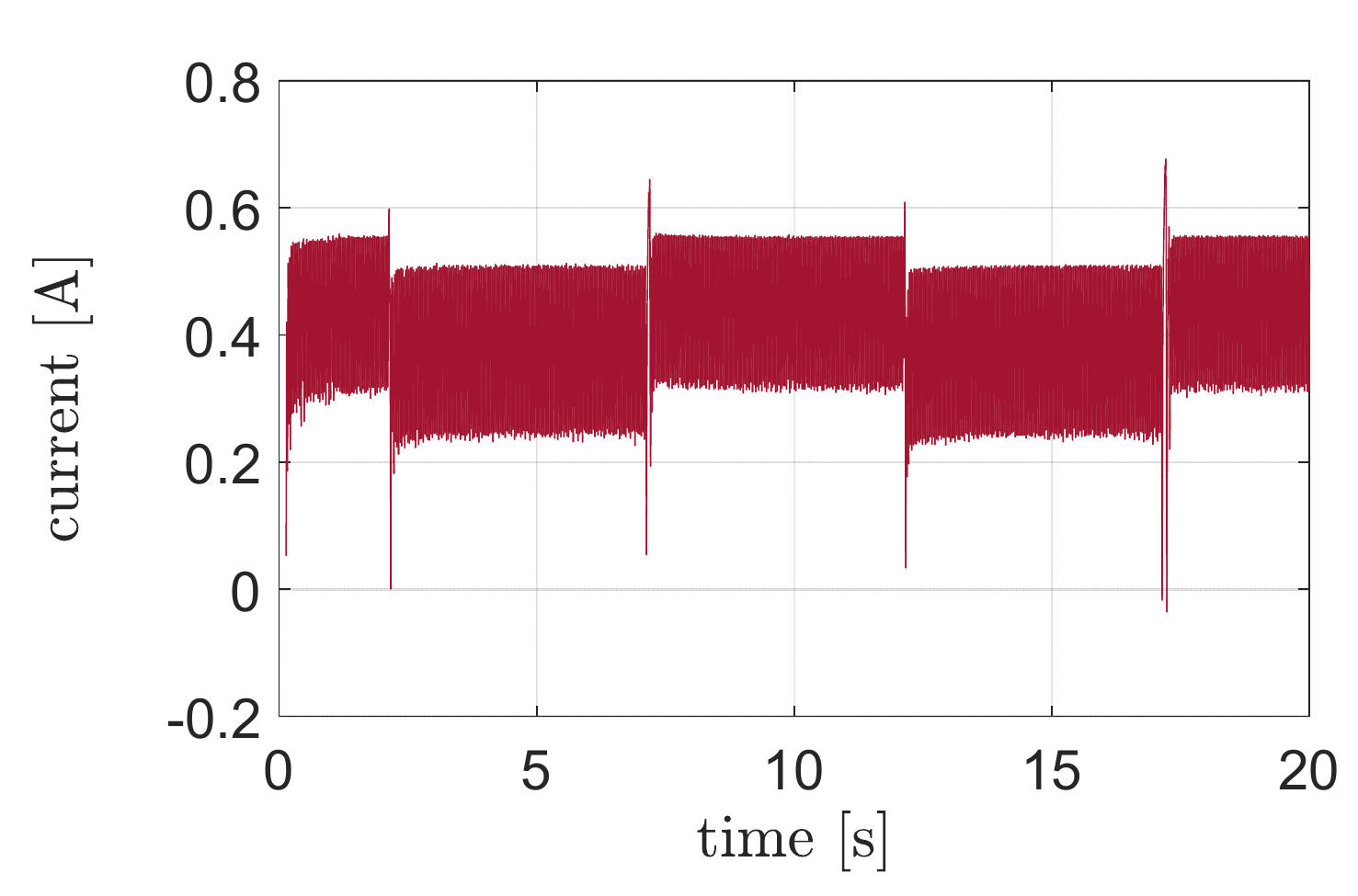}
    \caption{Input and output (experiment)}
    \label{fig:output-exp}
\end{figure}

\begin{figure}[]
    \centering
\includegraphics[width=4.2cm,height=3cm]{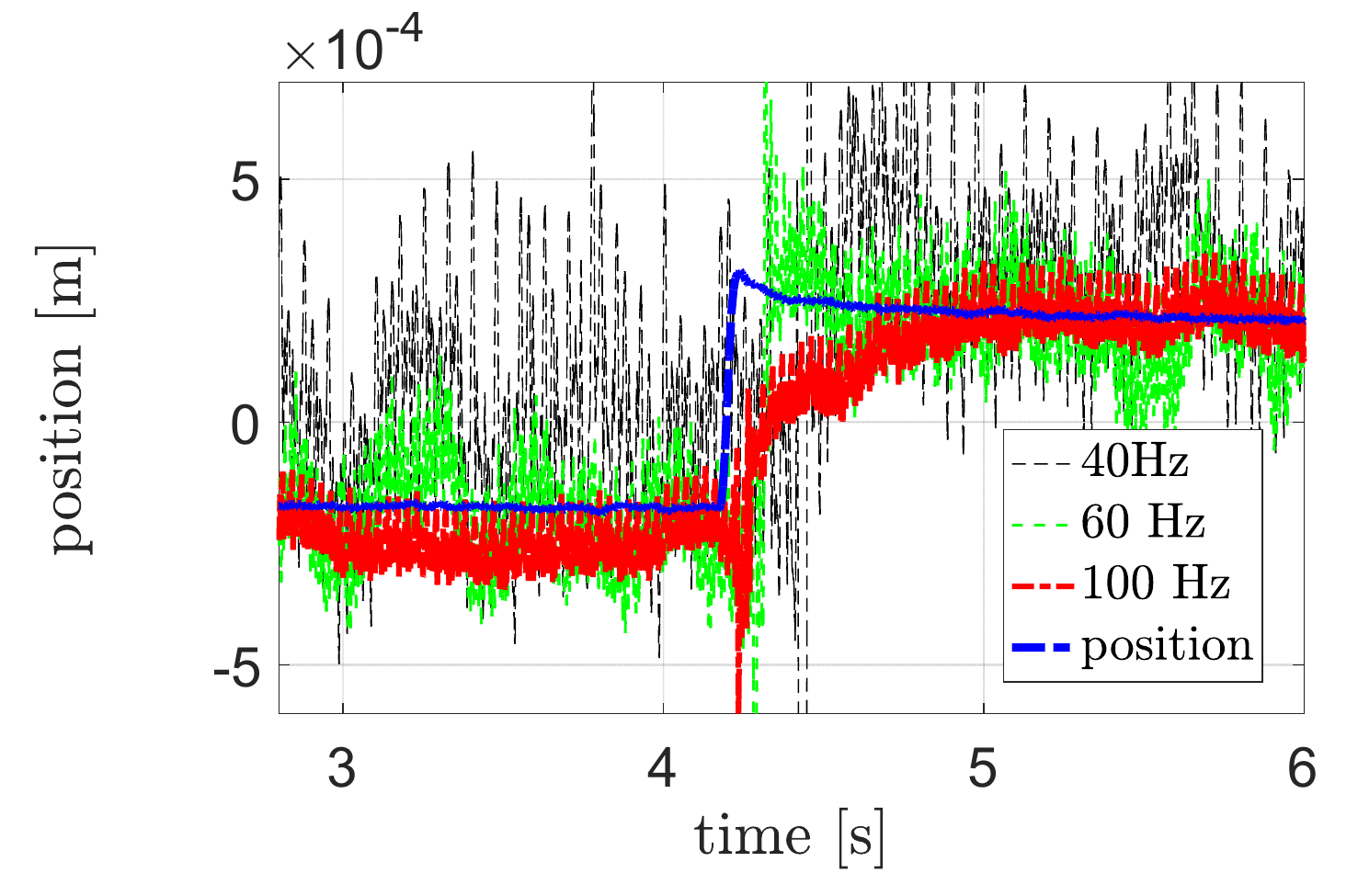}
    \caption{Position estimation for different excitation frequencies}
    \label{fig:freq}
\end{figure}
%
\section{Concluding Remarks}
\label{sec7}
%
In this work we present a novel method for adaptive state observation of a 1-DOF MagLev system, measuring only the coil current and without the knowledge of the electrical resistance. A gradient-descent (like) observer based on DREM is proposed and proven to guarantee---without imposing an excitation assumption---exponential convergence to an (arbitrarily small) residual set. The performance of the observer, and its application in a sensorless controller, has been verified by simulations and experiments.

Some remarks are in order.

\begin{itemize}
\item The 1-DOF MagLev system is a benchmark of electromechanical systems. We are currently investigating the application of the new observer to other electromechanical systems---in particular, electrical motors \cite{yi_draft}.
\item Signal injection is a widely-used technique-oriented method for electromechanical systems. With the notable exception of \cite{combesacc16,COMetal}, no theoretical analysis can be found in the literature. It is challenging to establish the connection between the proposed method and the standard techniques in industry, or provide a theoretical interpretation to the existing technique-oriented methods \cite{yi_draft2}.
\item Although we analyze the effects of the probing frequencies from the theoretical viewpoint, there are many practical considerations that must be taken into account before claiming it to be an operational technique.
\item The simulations and experiments were conducted in the Matlab/Simulink with continuous modules, and as observed the computational efficiency is relatively low. It is of practical interests to give a computationally high-efficient digital implementation of the proposed method.
\end{itemize}


\appendix
\section*{Proof of Lemma \ref{lem1}}
\lab{app1}

First, we will prove that for any smooth signal of the form
  $$
  r(t) = \bar{r}(t) + \varepsilon S(t)r_v(t),
  $$
by fixing $w=2n\varepsilon$ in the WZOH operator with $n \in \mathbb{Z}_+$, we have
\begequ
\lab{cla}
\calz_w[r](t) = \bar{r}(t - {w\over 2}) + \mathcal{O}(\varepsilon^2).
\endequ

According to the definition of the WZOH filter, we have
$$
\begin{aligned}
 \mathcal{Z}_{w}[r](t)
 = & {1 \over w} \int_{0}^{w}r (t-\mu) d\mu,
\end{aligned}
$$
where in the second equality we used the new variable
$
\mu := t -\tau.
$
Thus,
$$
\begin{aligned}
\mathcal{Z}_{w}[r](t)
         =    &  {1 \over w} \int_{0}^{w} \big(\bar{r}(t-\mu) + \varepsilon S(t-\mu)r_v(t-\mu) \big)d\mu\\
         =    & {1 \over w} \int_{0}^{w}\big( \bar{r}(t) - \dot{\bar{r}}(t)\mu + {1\over2}\ddot{\bar{r}}(t)\mu^2 + \delta_2(t,\mu)\mu^3 \big) d\mu  \\
            & + {\varepsilon \over w} \int_{0}^{w} \big( {r}_v(t) - \dot{{r}}_v(t)\mu + \delta_3(t,\mu)\mu^2 \big)
            S(t)d\mu\\
          =  &  \bar{r}(t-{w\over2}) +\varepsilon w \dot{r}_v(t) \mathbb{S}(t)  + \Delta_1(t,w) + \Delta_2(t,w),\\
\end{aligned}
$$
where we have used Taylor expansion---which holds for some mappings $\delta_2$ and $\delta_3$---for the second and third identities, and defined $\mathbb{S}$, as the primitive of $S$, in the fourth identity, and
$$
\begin{aligned}
\Delta_1(t,w) & : = {1 \over w} \int_{0}^{w}\big( {1\over2}\ddot{\bar{r}}(t)\mu^2 + \delta_2(t,\mu)\mu^3 \big) d\mu \\
\Delta_2(t,w) & :=  {\varepsilon \over w} \int_{0}^{w} \big( {r}_v(t) + \delta_3(t,\mu)\mu^2 \big) S(t)d\mu.
\end{aligned}
$$

Invoking $w=2n\varepsilon$, it yields
$$
\varepsilon w \dot{r}_v(t) \mathbb{S}(t)  + \Delta_1(t,w) + \Delta_2(t,w) = \mathcal{O}(\varepsilon^2),
$$
which completes the proof of the claim \eqref{cla}.

Now, it is obvious that,
\begequ
\lab{reg1}
 \mathcal{H}_{{d}}[y](t) = \begin{bmatrix} 1 & S(t-d) \end{bmatrix} \theta(t-d).
\endequ
Applying the property \eqref{cla} of the WZOH filter to the signal $y$ we get
\begequ
\lab{reg2}
\mathcal{Z}_{w}[y](t)  = \begin{bmatrix} 1 & 0 \end{bmatrix} \theta(t-{w\over 2}) + \mathcal{O}(\varepsilon^2).
\endequ
Therefore, selecting $w=2d$, and piling up the two new regressors \eqref{reg1} and \eqref{reg2}, we get
\begin{equation}
\label{eq7}
  \begin{bmatrix} \mathcal{H}_{{d}}[y](t) \\ \mathcal{Z}_{2d}[y](t) \end{bmatrix}
  = \Phi(t)   \theta(t -d)
  + \mathcal{O}(\varepsilon^2),
\end{equation}
where we defined the extended regressor matrix
\begequ
\lab{phi}
\Phi(t) :=   \begin{bmatrix} 1 & S(t -d) \\ 1 & 0 \end{bmatrix}.
\endequ

From \eqref{eq7} we clearly get
\begali{
\mathcal{H}_{{d}}[y](t) -  \mathcal{Z}_{2d}[y](t) & = S(t-d) \theta_2(t-d) + \mathcal{O}(\varepsilon^2),
\label{scareg}
}
completing the proof of the Lemma.

\end{document}